\documentclass[12pt]{article}

\usepackage{amsmath}
\usepackage{amssymb}
\usepackage{appendix}
\usepackage{amsthm}
\usepackage{graphicx}
\usepackage{setspace}
\usepackage{enumitem}
\usepackage{hyperref}

\usepackage[utf8]{inputenc}
\usepackage[english]{babel}
\usepackage{setspace}

\usepackage[margin=0.8in]{geometry}

\topmargin-0.8in

\usepackage{etoolbox}

\numberwithin{equation}{section}

% For double-spacing...
% \doublespace
% \setlength{\footnotesep}{0.5\baselineskip}

\theoremstyle{definition}

\newtheorem{theorem}{Theorem}[section]
\newtheorem{corollary}[theorem]{Corollary}
\newtheorem{lemma}[theorem]{Lemma}
\newtheorem{proposition}[theorem]{Proposition}
\newtheorem{example}{Example}[section]
\newtheorem{definition}{Definition}
\newtheorem{assume}{Assumption}

\AtEndEnvironment{example}{\null\hfill\qedsymbol}

\theoremstyle{remark}

\begin{document}

\author{Hyungbin Park\thanks{hyungbin@cims.nyu.edu, hyungbin2015@gmail.com}\\ \\
{\normalsize Courant Institute of Mathematical Sciences,}\\ {\normalsize New York University, New York, NY, USA}\\ \\
{\normalsize 23 September 2015}}
\title{Ross Recovery with Recurrent and Transient Processes\footnote{The author is grateful to Jonathan Goodman and Srinivasa Varadhan for technical insights. 
The author also appreciates two unknown referees for their constructive feedback.} \thanks{First version: 19 September 2014}}

\date{}
\maketitle

\begin{abstract}
Recently, Ross \cite{Ross13} argued 
that it is possible to recover an objective measure from a risk-neutral measure.
His model assumes that there is a finite-state Markov process $X_{t}$ that drives the economy
in discrete time $t\in\mathbb{N}.$
Many authors extended his model to a continuous-time setting with a Markov diffusion process 
$X_{t}$ with state space $\mathbb{R}.$ Unfortunately, the continuous-time
model fails to recover an objective measure from a risk-neutral measure in general. 
We determine under which information recovery is possible 
in the continuous-time model.
It was proven that if $X_{t}$ is {\em recurrent} under the objective measure, then recovery is possible.
In this article,
when $X_{t}$ is {\em transient} under the objective measure,
we investigate what information is sufficient to recover.\newline

\noindent Keywords: Ross recovery, Markovian pricing operators, recurrence, transience
\end{abstract}

\section{Introduction}
\label{sec:intro}

Quantitative finance theory involves two related probability measures: a risk-neutral measure and an objective measure.
The risk-neutral measure determines the
prices of assets and options in a financial market. The risk-neutral measure is distinct from the 
objective measure, which describes the actual stochastic dynamics of markets. The conventional 
belief is that one cannot determine an objective measure by observing a risk-neutral measure. The 
best known example capturing this belief is the Black-Scholes model, which says that the drift 
of a stock under a risk-neutral measure is independent of the drift of the stock under an
objective measure.

Recently, Ross \cite{Ross13} questioned this belief and argued that it is possible to recover an objective
measure from a risk-neutral measure under some circumstances. His model assumes that there 
is an underlying process $X_{t}$ that drives the entire economy with a finite number of states on 
discrete time $t\in\mathbb{N}.$ This result can be of great interest to finance researchers and 
investors, and thus it is highly valuable to extend the Ross model to a continuous-time setting, 
which is practical and useful in finance.

In this paper, we investigate the possibility of recovering in a continuous-time
setting $t\in\mathbb{R}$ with a time-homogeneous Markov diffusion process $X_{t}$ with state 
space
$\mathbb{R}.$ In this setting, the risk-neutral measure contains some information about an
objective measure. In general, however, the model unfortunately fails to recover an objective measure from a risk-neutral measure.

A key idea of recovery theory is that the reciprocal of the pricing kernel
is expressed in the form
$e^{\beta t}\,\phi(X_{t})$
for some constant $\beta$ and positive function $\phi(\cdot).$ 
For example, in the {\em consumption-based capital asset model} in \cite{Breeden79} and \cite{Karatzas98}, the pricing kernel is expressed in the above form.
The basis of recovery theory is 
finding $\beta$ and $\phi(\cdot).$ Thus, we obtain the pricing kernel and the 
relationship between the objective measure and the risk-neutral measure.

We will see that $\beta$ and $\phi(\cdot)$ satisfy the second-order differential
equation 
\begin{equation} \label{eqn:DE}
\frac{1}{2}\sigma^2(x){\phi ''(x)}+k(x)\phi '(x)
-r(x)\phi (x) =-\beta \,\phi (x)\,.
\end{equation}
Thus, recovery theory is transformed into a problem of finding a particular solution 
pair $(\beta,\phi)$ of this particular differential equation with $\phi(\cdot)>0.$ If such a solution 
pair were unique, then we could successfully recover the objective measure. Unfortunately, this 
approach categorically fails to achieve recovery because such a solution pair is never unique.

Many authors have extended the Ross model to a continuous-time setting and have also
confronted the non-uniqueness problem. To overcome the non-uniqueness problem, all 
authors assumed more conditions onto their models so that the differential equation \eqref{eqn:DE} has a unique solution pair 
satisfying the conditions.

Carr and Yu \cite{Carr12} introduced the notion of Long's discovery of the numeraire portfolio
to extend the Ross model to a continuous-time setting. 
They assumed Long's portfolio depends on time $t$ 
and the underlying process $X_{t},$ and then they derived the above differential equation \eqref{eqn:DE}. Carr and Yu also 
assumed that the process $X_{t}$ is a time-homogeneous Markov diffusion on a {\em bounded} 
interval with regular boundaries at both endpoints. They also implicitly assumed that $\phi(\cdot)$ is in 
$L^{2}(w)$ for some measure $w$ to apply the regular Sturm-Liouville theory, thereby obtaining 
a unique solution pair satisfying these conditions.
Dubynskiy and Goldstein \cite{Dubynskiy13}
explored Markov diffusion models with reflecting boundary conditions.

Walden \cite{Walden13} extended the results of Carr and Yu to the case that $X_{t}$ is an 
{\em unbounded} process. Walden proved that recovery is possible if the process $X_{t}$ is 
{\em recurrent} under the objective measure. In addition, he showed that when recovery is possible 
in the unbounded case, approximate recovery is possible from observing option prices on a 
bounded subinterval.

Qin and Linetsky \cite{Qin14a} proved that recovery is possible if $X_{t}$ is recurrent and the pricing 
kernel admits a Hansen-Scheinkman decomposition. They also 
showed that the Ross recovery has a close connection with Roger's potential approach to the 
pricing kernel. Borovicka, Hansen and Scheinkman \cite{Borovicka14} showed that the recovery is possible if the 
process $X_{t}$ is {\em stochastically stable} under the objective measure. They also discussed 
applications of the recovery theory to finance and economics.

The papers of Borovicka, Hansen and Scheinkman \cite{Borovicka14}, Qin and Linetsky
\cite{Qin14a} and Walden \cite{Walden13} assumed a common condition on $X_{t}$. Specifically, $X_{t}$ is {\em recurrent} under the objective measure. The mathematical rationale for 
this condition is to overcome the non-uniqueness problem of the differential equation \eqref{eqn:DE}. Indeed, 
if existent, there is a unique solution pair $(\beta,\phi)$ of the equation \eqref{eqn:DE} satisfying this condition and we will review this condition in 
Section \ref{sec:recurrent_recovery}.

In this article, we investigate the possibility of recovery when the process $X_{t}$ is {\em
transient} under the objective measure. We explore in this case what information is sufficient to recover. One of the main contributions is that if $\beta$ is known and if $X_{t}
$ is {\em non-attracted} to the left (or right) boundary under the objective measure, then recovery is possible.
To achieve this, we establish a graphical understanding of recovery theory. 
This topic is discussed in Section \ref{sec:recurrence_and_transience} and \ref{sec:recovery_theory}.
In Section \ref{sec:applications}, two examples of recovery theory are explored:
the Cox-Ingersoll-Ross (CIR) interest rate model and the Black-Scholes stock model.
Section \ref{sec:conclusion} summarizes this article.

\section{Markovian pricing operators}
\label{sec:Markovian_pricing}

A financial market is defined as a probability space
$(\Omega,\mathcal{F},\mathbb{P})$ having a one-dimensional Brownian motion $B_{t}$
with the filtration $\mathcal{F}=(\mathcal{F}_{t})_{t=0}^{\infty}$ generated by $B_{t}$.  
All the processes
in this article are assumed to be adapted to the filtration $\mathcal{F}$. 
$\mathbb{P}$ is the objective measure of this market.
We assume that there are a state variable $X_t$ and a positive numeraire $G_t$ in the market.

Let $\mathbb{Q}$ be an equivalent measure on the market $(\Omega,\mathcal{F},\mathbb{P})$ such that each risky asset discounted by the numeraire $G_t$ is a martingale under measure $\mathbb{Q}.$
It is customary that this measure $\mathbb{Q}$ is referred to as a risk-neutral measure when $G_t$ is a money market account. 
In this article, however, for any given positive numeraire $G_t,$ we say $\mathbb{Q}$ is a risk-neutral measure with respect to $G_t.$  
Set the Radon-Nikodym derivative by 
$\Sigma_{t}=\left.\frac{d \mathbb{Q}}{d \mathbb{P}}
\right|_{\mathcal{F}_{t}},$
which is known to be a martingale process 
on $(\Omega,\mathcal{F},\mathbb{P})$ for $0<t<T.$ Using the martingale representation theorem, we can write in the stochastic differential equation form 
$d\Sigma_{t}=-\rho_{t}\Sigma_{t}\, dB_{t}$ for some $\rho_{t}$.
 It is well-known that $W_{t}$ defined by
\begin{equation}\label{eqn:Girsanov}
dW_{t}:=\rho_{t}dt+dB_{t}
\end{equation}
 is a Brownian motion under $\mathbb{Q}.$
We define {\em the reciprocal of the pricing kernel}  by $L_{t}=G_t/\Sigma_{t}.$

\begin{assume} \label{assume:X}
The state variable $X_{t}$ is a
time-homogeneous Markov diffusion process satisfying 
$$dX_{t}=b(X_{t})\,dt+\sigma(X_{t})\,dW_{t}\,,\;X_{0}=\xi\;.$$
The range of $X_t$ is an open interval $I=(c,d)$ with $-\infty\leq c<d\leq\infty.$
$b(\cdot)$ and $\sigma(\cdot)$ are continuously differentiable on $I$
and that $\sigma(x)>0$ for $x\in(c,d).$
\end{assume}
\noindent It is implicitly assumed that both endpoints are unattainable because the range of the process is an open interval.

\begin{assume} \label{assume:interest_rate}
The dynamics of the numeraire $G_t$ is determined by $X_{t}.$ More precisely, $G_t$ follows
$$\frac{dG_t}{G_t}=(r(X_t)+v^2(X_t))\,dt+v(X_t)\,dW_t\;,\; G_0=1\;.$$
Assume that $r$ and $v$ are continuously differentiable on $I$ and 
$$\exp{\left(-\frac{1}{2}\int_{0}^t v^2(X_s)\,ds-\int_0^t v(X_s)\,dW_s\right)}$$
is a martingale.
\end{assume}
\noindent We assume that we can extract theses four functions $b(\cdot),\sigma(\cdot),r(\cdot)$ and $v(\cdot)$ from market prices data, thus they are assumed to be known ex ante. The above martingale assumption is to define a new measure by using the Girsanov theorem, for example, in the proof of Theorem \ref{thm:criterion_martingality}.
It is noteworthy that if 
there is a money market account with interest rate, denoted by $r_t,$ in the market, then $r_t$ is equal to $r(X_t)$ because $e^{\int_0^tr_s\,ds}\cdot G_t^{-1}$ is a martingale under $\mathbb{Q}.$

\begin{assume} \label{assume:transition_indep}
Assume that (the reciprocal of) the pricing kernel $L_t$ is {\em transition independent} in the sense that 
there are a positive function $\phi\in C^{2}(I)$ and a real number $\beta$ 
such that
\begin{equation} \label{eqn:RN_X}
L_{t}=e^{\beta t}\,\phi(X_{t})\,\phi^{-1}(\xi)\; .
\end{equation}
In this case, we say $(\beta,\phi)$ is a
{\em principal pair} of the market.
\end{assume}
\noindent The basis of recovery theory is finding the principal pair $(\beta,\phi)$ and then obtaining the
objective measure $\mathbb{P}$ by setting the Radon-Nikodym derivative
$$\left.\frac{d\mathbb{P}}{d\mathbb{Q}}\right|_{\mathcal{F}_{t}}=\Sigma_t^{-1}=e^{\beta t}\,\phi(X_{t})\,\phi^{-1}(\xi)\,G_t^{-1}\;.$$

One important aspect in implementing the recovery approach is to decide how to choose state variables $X_t.$
Many processes can serve as a state variable and the choice of a state variable depends on the purpose of use.
One way is the short interest rate $r_{t}.$
Investors interested in the price of bonds want to find the dynamics of $r_t$ under an objective measure.
Plenty of examples with interest rate state variables can be found in \cite{Qin14b}.
Another way is a stock market index process such as the Dow Jones Industrial Average and Standard $\&$ Poor’s (S$\&$P) 500.
Refer to \cite{Audrino} for an empirical analysis of recovery theory with the state variable S$\&$P 500.

\section{Transformed measures}
\label{sec:transformed}
We investigate how recovery theory is transformed into a problem of a differential equation.
Applying the Ito formula to the definition of $L_t,$ we know
$$dL_{t}=(r(X_t)+v^2(X_t)+v(X_t)\rho_t)\,L_{t}\,dt +(v(X_t)+\rho_{t})L_{t}\, dW_{t}\,.$$
From \eqref{eqn:RN_X}, we also have
$$dL_{t}=\left(\beta+\frac{1}{2}(\sigma^{2}\phi''\phi^{-1})(X_{t})+(b\phi'\phi^{-1})(X_{t})\right)L_{t}
\,dt+(\sigma\phi'\phi^{-1})(X_{t})\,L_{t}\,dW_{t}\;.$$
By comparing these two equations, we obtain
\begin{equation}\label{eqn:rho}
\left\{ \quad
\begin{aligned}
&\frac{1}{2}\sigma^2{\phi ''}+(b-v\sigma)\phi'
-r\phi=-\beta \,\phi\,,\\
&\rho_{t}=(\sigma\phi'\phi^{-1}-v)(X_{t})\;.
\end{aligned}\right.
\end{equation}
For convenience, set $k(x):=(b-v\sigma)(x).$
Using notation $\mathcal{L}$ defined by
$\mathcal{L}\phi(x)=\frac{1}{2}\sigma^2(x){\phi ''(x)}+k(x)\phi '(x)
-r(x)\phi (x),$
we have the following theorem.
\begin{theorem}
Let $(\beta,\phi)$ be
the principal pair of the market. Then $(\beta,\phi)$
satisfies 
$\mathcal{L}\phi=-\beta\phi.$
\end{theorem}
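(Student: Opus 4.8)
The plan is to produce two independent Itô expansions of the same process $L_t$ and then equate their drift and diffusion coefficients; both expansions already appear in the displays preceding the statement, so the real content is justifying the comparison and simplifying. I would emphasize that the theorem is essentially a restatement of the first line of \eqref{eqn:rho} in the operator notation $\mathcal{L}\phi = -\beta\phi$.

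First I would expand $L_t$ from its definition $L_t = G_t/\Sigma_t$. Using $d\Sigma_t = -\rho_t\Sigma_t\,dB_t$, Itô's formula gives the dynamics of $\Sigma_t^{-1}$, and combining this with the dynamics of $G_t$ from Assumption \ref{assume:interest_rate} and substituting $dB_t = dW_t - \rho_t\,dt$ from \eqref{eqn:Girsanov} yields
$$dL_{t}=(r(X_t)+v^2(X_t)+v(X_t)\rho_t)\,L_{t}\,dt +(v(X_t)+\rho_{t})L_{t}\, dW_{t}\,.$$
Second, I would invoke transition independence (Assumption \ref{assume:transition_indep}), which asserts $L_t = e^{\beta t}\phi(X_t)\phi^{-1}(\xi)$. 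Applying Itô's formula to this expression, with $dX_t = b(X_t)\,dt + \sigma(X_t)\,dW_t$ from Assumption \ref{assume:X}, produces the second expansion
$$dL_{t}=\left(\beta+\tfrac{1}{2}(\sigma^{2}\phi''\phi^{-1})(X_{t})+(b\phi'\phi^{-1})(X_{t})\right)L_{t}\,dt+(\sigma\phi'\phi^{-1})(X_{t})\,L_{t}\,dW_{t}\;.$$

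The key step is the coefficient comparison. Since both representations are continuous semimartingales driven by the same Brownian motion $W_t$, uniqueness of the semimartingale decomposition forces the two drift coefficients and the two diffusion coefficients to agree. Matching the $dW_t$ terms gives $\rho_t = (\sigma\phi'\phi^{-1} - v)(X_t)$, the second line of \eqref{eqn:rho}. Substituting this into the equality of $dt$ coefficients, the $v^2$ terms cancel; dividing by $L_t$ and multiplying through by $\phi$ then collapses the relation to $\tfrac{1}{2}\sigma^2\phi'' + (b-v\sigma)\phi' - r\phi = -\beta\phi$. Setting $k = b - v\sigma$ and recognizing the left-hand side as $\mathcal{L}\phi$ gives the claim.

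I do not expect a serious obstacle, since both Itô computations are mechanical. The only point demanding a word of care is the coefficient-matching step, which relies on the fact that a continuous local martingale of finite variation is constant; this guarantees the drift and martingale parts of the two expansions coincide, so the comparison is legitimate rather than merely formal.
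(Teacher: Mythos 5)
Your proposal is correct and follows essentially the same route as the paper: the paper's Section \ref{sec:transformed} derives exactly these two It\^o expansions of $L_t$ and obtains \eqref{eqn:rho} by matching drift and diffusion coefficients, which is the entire content of the theorem. Your added remark justifying the coefficient comparison via uniqueness of the semimartingale decomposition is a point the paper leaves implicit, but it changes nothing substantive.
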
 \noindent
In other words, if $(\lambda,h)$
is a solution pair of $\mathcal{L}h=-\lambda h$
with $h>0,$
then
$(\lambda,h)$ is a candidate pair for the principal pair of $X_{t}.$

We are interested in a solution pair $(\lambda,h)$ of
$\mathcal{L}h=-\lambda h$ with positive function $h.$
There are two possibilities.
\begin{itemize}[noitemsep,nolistsep]
\item[\textnormal{(i)}] there is no positive solution $h$ for any $\lambda\in\mathbb{R}$, or
\item[\textnormal{(ii)}] there exists a number $\overline{\beta}$ such that it
has two linearly independent positive solutions for $\lambda<\overline{\beta},$ has no positive solution for
$\lambda>\overline{\beta}$ and has one or two linearly independent solutions for $\lambda=\overline{\beta}.$  
\end{itemize}
Refer to page 146 and 149 in \cite{Pinsky}.
In this article, we implicitly assumed the second case by Assumption \ref{assume:transition_indep}.

It is easily checked that
$$e^{\lambda t}\,h(X_{t})\,h^{-1}(\xi)\,G_t^{-1}$$
is a local martingale under $\mathbb{Q}.$
When this is a martingale, one can {\em attempt} to recover the objective measure $\mathbb{P}$ by setting this as a Radon-Nikodym derivative.

\begin{definition}
Let $(\lambda,h)$ be a solution pair of
$\mathcal{L}h=-\lambda h$ with positive function $h.$
Suppose that $e^{\lambda t}\,h(X_{t})\,h^{-1}(\xi)\,G_t^{-1}$ is a martingale.
A measure obtained
from the risk-neutral measure $\mathbb{Q}$
by the Radon-Nikodym derivative 
$$\left.\frac{\,d\,\cdot\,}{d\mathbb{Q}}\right|_{\mathcal{F}_{t}}=e^{\lambda t}\,h(X_{t})\,h^{-1}(\xi)\,G_t^{-1}$$
is called
{\em the transformed measure} with respect to the pair $(\lambda,h).$
\end{definition}
\noindent
Clearly, the transformed measure with respect to the principal pair is the objective measure 
$\mathbb{P}.$
We have the following proposition by \eqref{eqn:Girsanov} and \eqref{eqn:rho}.
\begin{proposition} \label{prop:dynamics_under_P}
A process $B_{t}^{h}$ defined by
$dB_{t}^{h}=-(\sigma h'h^{-1}-v)(X_{t})\,dt+dW_{t}$
is a Brownian motion under the transformed measure with respect to $(\lambda,h).$
Furthermore,
$X_{t}$ follows
\begin{equation}\label{eqn:X_under_P}
\begin{aligned}
dX_{t}&=(b-v\sigma+\sigma^{2}h'h^{-1})(X_{t})\,dt+\sigma(X_{t})\,dB_{t}^{h}\\
&=(k+\sigma^{2}h'h^{-1})(X_{t})\,dt+\sigma(X_{t})\,dB_{t}^{h}\;.
\end{aligned}
\end{equation}
\end{proposition}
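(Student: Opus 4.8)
The plan is to compute the stochastic differential of the Radon--Nikodym density under $\mathbb{Q}$, recognize it as a stochastic exponential, and then read off the Girsanov drift. Write $M_t := e^{\lambda t}\,h(X_t)\,h^{-1}(\xi)\,G_t^{-1}$ for the density of the transformed measure against $\mathbb{Q}$, which is a genuine $\mathbb{Q}$-martingale by the hypothesis built into the definition. The first and central task is to establish that $dM_t = M_t\,\theta_t\,dW_t$ with $\theta_t = (\sigma h'h^{-1}-v)(X_t)$.

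To obtain this, I would first apply Ito's formula to $G_t^{-1}$ using Assumption~\ref{assume:interest_rate}, which yields $d(G_t^{-1}) = -G_t^{-1}\bigl(r(X_t)\,dt + v(X_t)\,dW_t\bigr)$. Next I would apply Ito's formula to $e^{\lambda t}h(X_t)$ under the $\mathbb{Q}$-dynamics $dX_t = b(X_t)\,dt + \sigma(X_t)\,dW_t$; the crucial step here is to eliminate the second derivative by substituting $\tfrac{1}{2}\sigma^2 h'' = -\lambda h + rh - kh'$, which is just the defining relation $\mathcal{L}h = -\lambda h$ with $k = b - v\sigma$. Combining the two pieces with the product rule for $e^{\lambda t}h(X_t)\cdot G_t^{-1}$, and including the cross-variation term coming from the common $dW_t$ factor, the drift cancels identically --- this is precisely the local-martingale property already noted in the excerpt --- and the diffusion coefficient collapses to $(\sigma h'h^{-1}-v)(X_t)$, pinning down $\theta_t$.

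With $dM_t = M_t\,\theta_t\,dW_t$ in hand, Girsanov's theorem gives that under the transformed measure the process $B_t^h$ determined by $dB_t^h = dW_t - \theta_t\,dt = -(\sigma h'h^{-1}-v)(X_t)\,dt + dW_t$ is a Brownian motion, which is the first assertion. As a consistency check, specializing to $(\lambda,h)=(\beta,\phi)$ returns $\theta_t = \rho_t$ from \eqref{eqn:rho}, so $B_t^\phi$ agrees with the original $\mathbb{P}$-Brownian motion $B_t$ through \eqref{eqn:Girsanov}. For the formula for $X_t$, I would simply insert $dW_t = dB_t^h + (\sigma h'h^{-1}-v)(X_t)\,dt$ into $dX_t = b(X_t)\,dt + \sigma(X_t)\,dW_t$ and collect terms, producing drift $b - v\sigma + \sigma^2 h'h^{-1} = k + \sigma^2 h'h^{-1}$ and unchanged diffusion $\sigma$.

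The Ito computations are routine, so the one point demanding genuine care is the legitimacy of the measure change: Girsanov's theorem needs $M_t$ to be a true martingale, not merely the local martingale produced by the drift-cancellation, which is exactly why the definition of the transformed measure assumes this. I would therefore invoke the martingale hypothesis explicitly at the Girsanov step. A secondary point to watch is the Girsanov sign convention --- the compensator $\int_0^t\theta_s\,ds$ is subtracted from $W$ --- so that the drift in $dB_t^h$ carries the correct minus sign matching the statement.
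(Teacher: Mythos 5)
Your proposal is correct and follows essentially the same route as the paper: the paper obtains this proposition directly from \eqref{eqn:Girsanov} and \eqref{eqn:rho}, i.e., from the very Ito/drift-cancellation computation you carry out (applied to a general candidate pair $(\lambda,h)$ rather than the principal pair), followed by Girsanov's theorem. Your sign conventions and the identification $\theta_t=(\sigma h'h^{-1}-v)(X_t)$ match the paper's $\rho_t$, and your explicit appeal to the martingale hypothesis at the Girsanov step is exactly the role played by the martingale requirement in the definition of the transformed measure.
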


\noindent Occasionally, we use the notation $B_t$ instead of $B_t^h$ without ambiguity.

Even when $e^{\lambda t}\,h(X_{t})\,h^{-1}(\xi)\,G_t^{-1}$ is not a martingale, we can consider the diffusion process corresponding to \eqref{eqn:X_under_P}.

\begin{definition}
The diffusion process $X_t$ defined by 
$$dX_t=(k+\sigma^2 h'h^{-1})(X_t)\,dt+\sigma(X_t)\,dB_t$$
is called the diffusion process {\em induced by} $(\lambda,h).$
\end{definition}

\section{Recurrence and transience}
\label{sec:recurrence_and_transience}
\subsection{Mathematical preliminaries}
\label{sec:math_preli}
We establish the mathematical preliminaries for recurrent and transient 
processes. The contents of this section are indebted to \cite{Durrett96}, \cite{Karatzas91}
and \cite{Karlin81}.
Consider the diffusion process induced by $(\lambda,h):$
$$dX_t=\mu(X_t)\,dt+\sigma(X_t)\,dB_t\,,\;X_0=\xi\,,$$
where $\mu(\cdot)=(k+\sigma^{2}h'h^{-1})(\cdot).$ 
A measure $S$ defined by
$$dS(x):
=e^{-\int_{\xi}^{x}\frac{2\mu(y)}{\sigma^{2}(y)}\,dy}\,dx
=\frac{h^2(\xi)}{h^2(x)}\,e^{-\int_{\xi}^{x}\frac{2k(y)}{\sigma^{2}(y)}\,dy}\,dx$$
is called {\em the scale measure} of the process with respect to the pair $(\lambda,h).$
\begin{definition}
The left boundary $c$ is {\em attracting} if $S((c,\xi])<\infty$
and {\em non-attracting} otherwise.
Similarly, we say the right boundary $d$ is {\em attracting} if $S([\xi,d))<\infty$
and {\em non-attracting} otherwise.
\end{definition}
\noindent
Define a stopping time $\tau$ by the following way.
Let $(c_{n})_{n=1}^{\infty}, (d_{n})_{n=1}^{\infty}$ be strictly monotone sequences with limits $c$ and $d,$ respectively. Set $\tau_{n}:=\inf\,\{\,t>0\,|\, X_{t} \notin (c_{n},d_{n})\, \}$ and $\tau:=\lim_{n\rightarrow \infty}\tau_{n}.$

\begin{proposition}
The left boundary $c$ is non-attracting if and only if $$\textnormal{Prob}\left(\lim_{t\rightarrow\tau}X_{t}=c \right)=0\;.$$
It is similar to the right boundary $d.$
\end{proposition}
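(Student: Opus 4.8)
The plan is to reduce the statement to the asymptotics of a one-dimensional Brownian motion by passing to the natural scale. First I would fix the base point $\xi$ and define the strictly increasing \emph{scale function} $s(x):=S((\xi,x])$ for $x\ge\xi$ and $s(x):=-S((x,\xi])$ for $x<\xi$, so that $ds=dS$ and $s$ is a homeomorphism of $I=(c,d)$ onto $(u,w)$, where $u:=\lim_{x\downarrow c}s(x)$ and $w:=\lim_{x\uparrow d}s(x)$. Since $dS(x)=e^{-\int_\xi^x 2\mu/\sigma^2}\,dx$, the function $s$ solves $\tfrac12\sigma^2 s''+\mu s'=0$, so Itô's formula gives $d\,s(X_t)=(s'\sigma)(X_t)\,dB_t$; hence $M_t:=s(X_{t\wedge\tau})$ is a continuous local martingale with values in $(u,w)$. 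Because $S((c,\xi])=-u$, the left boundary is non-attracting exactly when $u=-\infty$, and since $s$ is a homeomorphism the assertion to be proved is equivalent to
\[
u=-\infty \quad\Longleftrightarrow\quad \mathrm{Prob}\!\left(\lim_{t\to\tau}M_t=u\right)=0 .
\]

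The core step is a trichotomy for $M$ obtained from the Dambis--Dubins--Schwarz time change $M_t=\widetilde B_{\langle M\rangle_t}$. Writing $\langle M\rangle_\tau$ for the terminal quadratic variation, almost surely exactly one of three things happens: either $\langle M\rangle_\tau=\infty$, in which case $M$ inherits the oscillation of Brownian motion, $\limsup_{t\to\tau}M_t=+\infty$ and $\liminf_{t\to\tau}M_t=-\infty$, forcing $u=-\infty$ and $w=+\infty$ and making $\lim_{t\to\tau}M_t$ fail to exist; or $\langle M\rangle_\tau<\infty$, in which case $M$ converges to a finite limit $M_\tau\in[u,w]$ and correspondingly $X_t\to s^{-1}(M_\tau)$. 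I would then show the limit cannot be an interior point: near any $x^\ast\in(c,d)$ the quantity $(s'\sigma)^2$ is bounded below, so convergence $X_t\to x^\ast$ would force $\langle M\rangle_\tau=\infty$, contradicting $\langle M\rangle_\tau<\infty$. This interior exclusion — which rests on the nondegeneracy and continuity in Assumption~\ref{assume:X} — is the one genuinely delicate point; granting it, $\{\lim_{t\to\tau}X_t=c\}=\{M_\tau=u\}$ up to a null set.

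It remains to record the exit estimate that both pins down the probability and lets the two directions be read off. For $c_n<\xi<d_m$ the stopped process $M_{t\wedge T_{c_n}\wedge T_{d_m}}$ is a bounded martingale and $T_{c_n}\wedge T_{d_m}<\infty$ almost surely, so optional stopping gives
\[
\mathrm{Prob}\!\left(T_{c_n}<T_{d_m}\right)=\frac{s(d_m)-s(\xi)}{s(d_m)-s(c_n)},
\]
where $T_a=\inf\{t>0:X_t=a\}$. If $u=-\infty$, then a finite limit $M_\tau$ can never equal $u$ and the oscillatory case also excludes $\lim_{t\to\tau}M_t=u$, whence $\mathrm{Prob}(\lim_{t\to\tau}X_t=c)=0$. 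Conversely, if $u>-\infty$ then $M-u\ge 0$ is a nonnegative local martingale, hence a convergent supermartingale, so the oscillatory case is ruled out and $X_t$ converges to $c$ or $d$; letting first $n\to\infty$ and then $m\to\infty$ in the displayed identity (using monotone continuity of probability along the nested events $\{T_{c_n}<T_{d_m}\}$) yields
\[
\mathrm{Prob}\!\left(\lim_{t\to\tau}X_t=c\right)=\frac{w-s(\xi)}{w-u}>0 ,
\]
interpreted as $1$ when $w=+\infty$.

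Putting these together, $\mathrm{Prob}(\lim_{t\to\tau}X_t=c)=0$ if and only if $u=-\infty$, that is, if and only if $c$ is non-attracting; the right-boundary statement follows by the symmetric argument with the roles of $c$ and $d$ (and of $u$ and $w$) interchanged. I expect the interior-point exclusion and the clean justification of the trichotomy to be the main obstacles, while the scale change and the optional-stopping computation are routine.
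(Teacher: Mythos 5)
Your argument is correct and is exactly the standard scale-function proof that the paper implicitly relies on: it states this proposition without proof, deferring to the cited references (Durrett; Karatzas--Shreve; Karlin--Taylor), where the result is established precisely by passing to natural scale, applying the Dambis--Dubins--Schwarz trichotomy with the interior-point exclusion, and computing the two-sided exit probabilities via optional stopping. No substantive gap; the delicate points you flag (interior exclusion via the local lower bound on $(s'\sigma)^2$, and the nested-limit passage in the gambler's-ruin identity) are handled the same way in those references.
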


\begin{proposition}
$X_{t}$ is recurrent if and only if both boundary points $c$ and $d$ are non-attracting.
\end{proposition}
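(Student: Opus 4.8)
The plan is to pass to the \emph{natural scale} and read off recurrence from the boundary behaviour of the scale function. Writing $S(x):=\int_\xi^x dS(y)$ for the scale function (the cumulative of the scale measure), $S$ is strictly increasing with $S(\xi)=0$, and by construction it kills the drift: $\tfrac12\sigma^2 S''+\mu S'=0$ because $S'(x)=e^{-\int_\xi^x 2\mu/\sigma^2}$. Hence $S(X_t)$ is a continuous local martingale. The sign conventions give $S(c^+)=-S((c,\xi])$ and $S(d^-)=S([\xi,d))$, so the left boundary is non-attracting precisely when $S(c^+)=-\infty$, and the right boundary is non-attracting precisely when $S(d^-)=+\infty$. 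First I would record these equivalences, so that the entire statement is rephrased as: $X_t$ is recurrent if and only if $S(c^+)=-\infty$ and $S(d^-)=+\infty$.

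The next step is the exit-probability (gambler's ruin) formula. For $c<a<\xi<b<d$ let $T_a,T_b$ be the first hitting times of $a$ and $b$. Since $\sigma$ is continuous and strictly positive, it is bounded away from $0$ on the compact interval $[a,b]$, so $X$ leaves $[a,b]$ in finite time almost surely; applying optional stopping to the bounded martingale $S(X_{t\wedge T_a\wedge T_b})$ then yields $\mathrm{Prob}(T_a<T_b)=\frac{S(b)-S(\xi)}{S(b)-S(a)}$ and $\mathrm{Prob}(T_b<T_a)=\frac{S(\xi)-S(a)}{S(b)-S(a)}$.

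For the direction asserting recurrence, assume both boundaries are non-attracting, i.e. $S(c^+)=-\infty$ and $S(d^-)=+\infty$, and fix a target level $z$; say $z>\xi$. Taking $b=z$ and letting $a\to c^+$ in the formula gives $\mathrm{Prob}(T_z<T_a)=\frac{S(\xi)-S(a)}{S(z)-S(a)}\to 1$, since $S(a)\to-\infty$. As the events $\{T_z<T_a\}$ increase to a subset of $\{T_z<\infty\}$ as $a\downarrow c$, continuity of probability from below yields $\mathrm{Prob}(T_z<\infty)=1$; the case $z<\xi$ is symmetric, letting $b\to d^-$ and using $S(d^-)=+\infty$. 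Thus every level is hit almost surely, and the strong Markov property upgrades this to infinitely many visits, which is recurrence. For the converse I would argue by contraposition using the preceding proposition: if, say, $c$ is attracting then $\mathrm{Prob}\!\left(\lim_{t\to\tau}X_t=c\right)>0$, and on that event the path eventually remains below any fixed $z>\xi$ and never returns to it, so $X$ is not recurrent (equivalently, $S(c^+)$ is finite forces $\mathrm{Prob}(T_z<T_a)\to\frac{S(\xi)-S(c^+)}{S(z)-S(c^+)}<1$, so $z$ is missed with positive probability). The symmetric argument handles an attracting right boundary.

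I expect the main obstacle to be the two limit passages $a\to c^+$ and $b\to d^-$: one must verify that the exit formula really extends to the limit, which I would handle by the monotonicity of the events $\{T_z<T_a\}$ in $a$ together with continuity of the measure, rather than by any interchange of limit and integral. The justification of optional stopping (finiteness of the exit time from compact subintervals) is routine given $\sigma>0$, and the algebraic identification of non-attracting boundaries with $S(c^+)=-\infty$ and $S(d^-)=+\infty$ is immediate from the definitions, so the delicate point is really the careful limiting argument that converts the finite-interval escape probabilities into the global statement that every level is reached almost surely.
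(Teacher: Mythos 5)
Your argument is correct and is exactly the standard scale-function/gambler's-ruin proof: the paper itself states this proposition as a known preliminary without proof, citing Durrett, Karatzas--Shreve and Karlin--Taylor, and your route (natural scale, optional stopping on $[a,b]$, monotone passage $a\to c^+$, $b\to d^-$) is precisely the argument in those references. The only point worth writing out fully is that on the event $\bigcap_a\{T_a<T_z\}$ one has $T_z=\infty$ because a continuous path with $T_z<\infty$ stays in a compact subinterval of $(c,z]$ up to time $T_z$ and so cannot have hit every level near $c$; you flag this limiting step as the delicate one, and that is indeed where the care is needed.
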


\subsection{Graphical understanding}
\label{sec:graphical_understaing}
We establish a graphical understanding of recovery theory.
Recall that $X_0=\xi$ in Assumption \ref{assume:X}.
The purpose of this section is to understand the graphs shown in Figure \ref{fig:candidate} and \ref{fig:admissible}. 
A solution $h$ of a second-order differential equation is uniquely determined by the initial value 
and the initial velocity.
By normalizing, we may assume $h(\xi)=1$ such that a solution is determined by $h'(\xi).$
In this section, we describe
the relationship between $h'(\xi)$ and the recovery of $X_{t}$ with respect to $h.$ 
Occasionally, we use the terminology without ambiguity:
the transformed measure with respect to {\em tuple}
$(\lambda,h'(\xi))$ means the transformed measure with respect to {\em pair} $(\lambda,h).$
The two terms tuple and pair will be used to distinguish between these meanings.

\begin{definition} 
We say $(\lambda,h'(\xi))\in \mathbb{R}^{2}$ is a candidate tuple or we say
$(\lambda,h)$ is a candidate pair
if
$(\lambda,h)$ is a solution pair of $\mathcal{L}h=-\lambda h$ with
$h(\cdot)>0$ and $h(\xi)=1.$ 
Denote the set of the candidate tuples by $\mathcal{C}.$
$$\mathcal{C}:=\left.\left\{(\lambda,h'(\xi))\in\mathbb{R}^{2}\right| \mathcal{L}h=-\lambda 
h,\, h>0,\, h(\xi)=1 \right\}.$$
\end{definition} \noindent

We investigate graphical properties
of $\mathcal{C}.$
Let $\overline{\beta}$ be the maximum value of the first coordinate of elements of 
$\mathcal{C},$ that is, $\overline{\beta}:=\max\{\; \lambda \;|\;(\lambda,z)\in\mathcal{C} \;\}\;.$
The maximum $\overline{\beta}$ is achieved as we discussed in Section \ref{sec:transformed}.
For any $\lambda$ with $\lambda\leq \overline{\beta},$
we set
$$M_{\lambda}:=\sup_{(\lambda,z)\in\mathcal{C}}z\,,\;\;
m_{\lambda}:=\inf_{(\lambda,z)\in\mathcal{C}}z\;.$$
\begin{proposition} \label{prop:slice}
Let $\lambda\leq\overline{\beta}.$
For any $z$ with $m_{\lambda}\leq z\leq M_{\lambda},$ the tuple $(\lambda,z)$ is in 
$\mathcal{C}.$
\end{proposition}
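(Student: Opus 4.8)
The plan is to fix $\lambda\le\overline{\beta}$ and exploit the fact that, for this fixed $\lambda$, the equation $\mathcal{L}h=-\lambda h$ is a regular second-order linear homogeneous ODE (regular because $\sigma>0$ and the coefficients are continuous). Its solution space is two-dimensional, and a solution is uniquely determined by the pair $(h(\xi),h'(\xi))$. I would introduce the two fundamental solutions $u,w$ of $\mathcal{L}h=-\lambda h$ determined by $u(\xi)=1,\ u'(\xi)=0$ and $w(\xi)=0,\ w'(\xi)=1$, so that the unique solution with $h(\xi)=1$ and $h'(\xi)=z$ is the affine family $h_z:=u+z\,w$. With this notation $(\lambda,z)\in\mathcal{C}$ is precisely the requirement $h_z>0$ on $(c,d)$, and the set whose structure I must analyze is $Z_\lambda:=\{\,z\in\mathbb{R}\mid h_z>0\text{ on }(c,d)\,\}$, which is nonempty for $\lambda\le\overline{\beta}$ by the case (ii) discussion of Section \ref{sec:transformed}, and whose supremum and infimum are $M_\lambda$ and $m_\lambda$.

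First I would prove that $Z_\lambda$ is convex. Because $h_z$ depends affinely on $z$, for $z_1,z_2\in Z_\lambda$ and $t\in[0,1]$ one has the identity
$$h_{tz_1+(1-t)z_2}=t\,h_{z_1}+(1-t)\,h_{z_2}\,.$$
As a convex combination of two strictly positive functions this is strictly positive on $(c,d)$, so $tz_1+(1-t)z_2\in Z_\lambda$. Hence $Z_\lambda$ is an interval, and it remains only to show that its endpoints are attained.

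Next I would show $M_\lambda\in Z_\lambda$ (and symmetrically $m_\lambda\in Z_\lambda$). Assuming $M_\lambda<\infty$, take $z_n\in Z_\lambda$ with $z_n\to M_\lambda$. Since $u$ and $w$ are fixed continuous functions, $h_{z_n}=u+z_n w\to h_{M_\lambda}$ pointwise on $(c,d)$; as each $h_{z_n}>0$, the limit obeys $h_{M_\lambda}\ge0$ on $(c,d)$. To upgrade this to strict positivity, suppose $h_{M_\lambda}(x_0)=0$ at some interior $x_0\in(c,d)$. Then $x_0$ is an interior minimum of the nonnegative $C^2$ function $h_{M_\lambda}$, so $h_{M_\lambda}'(x_0)=0$; but $h_{M_\lambda}$ solves the linear ODE, so by uniqueness the only solution with $h(x_0)=h'(x_0)=0$ is identically zero, contradicting $h_{M_\lambda}(\xi)=1$. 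Therefore $h_{M_\lambda}>0$ on $(c,d)$, i.e. $M_\lambda\in Z_\lambda$. Combining this with the convexity of $Z_\lambda$ yields $[m_\lambda,M_\lambda]\subseteq Z_\lambda$, which is the assertion. When $M_\lambda=+\infty$ or $m_\lambda=-\infty$ the corresponding bound is vacuous, and only the finite endpoint requires the closure argument.

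The main obstacle I anticipate is the last step, namely ruling out that the limiting solution $h_{M_\lambda}$ merely grazes zero at an interior point rather than staying strictly positive. The resolution hinges on the unique-continuation property of the regular second-order ODE: a nonnegative solution attaining the value zero in the interior must also have vanishing first derivative there, hence is identically zero---incompatible with the normalization $h(\xi)=1$. This is the only place where more than the affine structure of $h_z$ is used; the convexity argument is otherwise elementary.
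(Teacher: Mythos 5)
Your proof is correct, but it takes a genuinely different route from the paper's. The paper argues by reduction of order: starting from one positive normalized solution $h$, it exhibits the second independent solution explicitly as $h(x)\,S([\xi,x))$ with $S$ the scale measure of the induced diffusion, writes every normalized solution as $h(x)\bigl((1-c)+\tfrac{c}{B}\,S((-\infty,x])\bigr)$ (after normalizing the boundaries to $\pm\infty$, with $B=S((-\infty,\xi])$), determines the exact range of $c$ for which this is positive, and reads off $M_{\lambda}=h'(\xi)+1/B$ and $m_{\lambda}=h'(\xi)$ or $h'(\xi)-1/D$ in closed form. You replace this computation by two soft observations: since $h_{z}=u+zw$ depends affinely on $z=h'(\xi)$, the positivity set $Z_{\lambda}$ is convex; and $Z_{\lambda}$ is closed because a pointwise limit of positive solutions is a nonnegative solution, which by your interior-minimum/uniqueness argument cannot vanish at an interior point without vanishing identically (and $(c,d)$ is open, so every point is interior). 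Both arguments are complete. What the paper's computation buys, and yours does not, is the explicit identification of $M_{\lambda}$ and $m_{\lambda}$ in terms of the scale measure, which is reused directly in the proof of Theorem \ref{thm:main_thm}; what yours buys is economy, using only linearity of the ODE and uniqueness for the initial value problem. One small remark: the case $M_{\lambda}=+\infty$ that you set aside cannot occur, since $w(\xi)=0$ and $w'(\xi)=1$ force $w<0$ just to the left of $\xi$, so positivity of $u+zw$ at any such point bounds $z$ above; symmetrically $m_{\lambda}>-\infty$.
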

\noindent
Therefore, the supremum and infimum are, in fact, the maximum and minimum, respectively. 
Furthermore, the $\lambda$-slide of $\mathcal{C}$ is a connected and compact set.
See Appendix \ref{app:pf_slice} for proof.

\begin{theorem} \label{thm:main_thm}
The diffusion process 
induced by tuple $(\lambda,M_{\lambda})$
is non-attracted to the left boundary.
For $z$ with $m_{\lambda}\leq z<M_{\lambda},$
the diffusion process induced by tuple $(\lambda,z),$
is attracted to the left boundary.
\end{theorem}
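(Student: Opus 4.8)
The plan is to reduce the entire geometric statement to the behaviour of a single analytic quantity and then to compare it across the slice. For a candidate pair $(\lambda,h)$ with the normalization $h(\xi)=1$, the second form of the scale measure gives
\[
I(h):=S\bigl((c,\xi]\bigr)=\int_c^\xi \frac{p(x)}{h^2(x)}\,dx,\qquad p(x):=e^{-\int_\xi^x \frac{2k(y)}{\sigma^2(y)}\,dy},
\]
where I used $p(\xi)=1$. By the definition of attraction the left boundary $c$ is non-attracting exactly when $I(h)=\infty$ and attracting exactly when $I(h)<\infty$. Writing $h_z$ for the candidate solution with $h_z(\xi)=1$ and $h_z'(\xi)=z$, the theorem is therefore equivalent to comparing $I(h_z)$ for $z\in[m_\lambda,M_\lambda]$.

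The engine is a Wronskian identity. Putting $\mathcal Lh=-\lambda h$ in self-adjoint form $\bigl(p^{-1}h'\bigr)'+\frac{2(\lambda-r)}{\sigma^2 p}\,h=0$, one checks that for two candidate solutions $a,b$ the quantity $p^{-1}(a'b-ab')$ is constant, equal to its value $a'(\xi)-b'(\xi)$ at $\xi$. Taking $\psi:=h_{M_\lambda}$ and any candidate $g=h_{z'}$ with $z'<M_\lambda$ yields $(\psi/g)'=(M_\lambda-z')\,p/g^2>0$, so $\psi/g$ is strictly increasing and $L:=\lim_{x\to c}\psi(x)/g(x)$ exists in $[0,1)$. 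Integrating these exact derivatives from $c$ to $\xi$ gives the closed forms
\[
I(g)=\frac{1}{M_\lambda-z'}\Bigl(1-\lim_{x\to c}\tfrac{\psi}{g}\Bigr),\qquad I(\psi)=\frac{1}{M_\lambda-z'}\Bigl(\lim_{x\to c}\tfrac{g}{\psi}-1\Bigr).
\]
Thus the whole theorem collapses to the single claim $L=0$: specializing the first formula to $g=h_z$ then gives $I(h_z)=1/(M_\lambda-z)<\infty$ for every $z<M_\lambda$ (so $c$ is attracting), while $\lim_{x\to c}g/\psi=\infty$ makes $I(\psi)=\infty$ (so $c$ is non-attracting for $(\lambda,M_\lambda)$), which are precisely the two assertions.

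Proving $L=0$ is the heart of the matter and is where I expect the real work to be; it says that $h_{M_\lambda}$ decays strictly faster than every other positive solution as $x\to c$, i.e. $h_{M_\lambda}$ is the minimal solution at the left boundary. I would obtain it from the extremal characterization of $M_\lambda$ by a perturbation argument. Suppose $L>0$. Since $\psi/g$ is increasing with infimum $L$ on $I$, we have $g/\psi<1/L$ throughout $I$, so the linear combination $h_s:=(1+s)\psi-s\,g$ satisfies $h_s/\psi>1-s(1/L-1)$, which is strictly positive once $0<s<L/(1-L)$. For such $s$, $h_s$ is again a solution of $\mathcal Lh=-\lambda h$ with $h_s(\xi)=1$ and $h_s'(\xi)=M_\lambda+s(M_\lambda-z')>M_\lambda$, and it is positive on all of $I$; hence $(\lambda,h_s'(\xi))\in\mathcal C$ with first-derivative coordinate exceeding $M_\lambda$, contradicting the maximality of $M_\lambda$. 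Therefore $L=0$. Notably, this step needs only the positivity of $g$ (through the sign of the Wronskian), not any separate invocation of disconjugacy.

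Finally, a remark disposes of the degenerate slice. The comparison solution $g$ with slope strictly below $M_\lambda$ exists whenever $m_\lambda<M_\lambda$, which by the dichotomy of Section~\ref{sec:transformed} holds for every $\lambda<\overline\beta$; for such $\lambda$ both assertions are fully established above. When $\lambda=\overline\beta$ admits a unique positive solution the second assertion is vacuous and the first follows from the recurrence characterization of Section~\ref{sec:recurrence_and_transience} (the induced process is then recurrent, so $c$ is non-attracting), and when $\lambda=\overline\beta$ admits two independent solutions the argument above applies verbatim.
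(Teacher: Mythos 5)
Your proof is correct and is, at its core, the paper's argument repackaged. Your Wronskian identity $p^{-1}(\psi'g-\psi g')\equiv M_\lambda-z'$ is exactly the paper's Riccati computation in integrated form: writing $h=uq$, $g=vq$ and $\Gamma=\frac{u'}{u}-\frac{v'}{v}=\left(\log(\psi/g)\right)'$, the paper's identity $v^{-2}=\frac{\Gamma(x)}{\Gamma(0)}e^{\int_\xi^x\Gamma}$ is precisely your $(\psi/g)'=(M_\lambda-z')\,p/g^2$, and integrating over $(c,\xi]$ gives the paper's bound $S((c,\xi])\leq (M_\lambda-z')^{-1}$, which is your exact formula $I(g)=(1-L)/(M_\lambda-z')$ with the boundary limit retained. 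So the second assertion is handled identically --- and, as your formula makes plain, it does not actually require $L=0$, so your claim that the whole theorem collapses to $L=0$ slightly overstates what that lemma is needed for. For the first assertion, both you and the paper contradict the maximality of $M_\lambda$ by exhibiting a positive solution with slope exceeding $M_\lambda$: the paper manufactures it as $h\cdot\left(1+B^{-1}S((c,\cdot\,])\right)$ directly from the assumed finiteness of $B=S((c,\xi])$, whereas you perturb along an existing second solution via $h_s=(1+s)\psi-sg$ and the uniform bound $g/\psi<1/L$. Your route isolates the clean statement that $h_{M_\lambda}$ is the minimal solution at $c$, but it costs you the degenerate slice $m_\lambda=M_\lambda$, where no comparison solution $g$ exists; there your appeal to recurrence of the induced process is under-justified as written, since that recurrence is the content of Corollary \ref{cor:two_cases}(i), which the paper derives downstream of this theorem. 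The fix is immediate and non-circular --- if $S((c,\xi])<\infty$, the construction in the proof of Proposition \ref{prop:slice} already yields a second positive solution, contradicting uniqueness --- and it is precisely because the paper's first-part argument uses that construction that it covers both cases uniformly.
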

\noindent
Similarly, the diffusion process induced by tuple $(\lambda,m_{\lambda})$ is non-attracted to the right boundary.
For $z$ with $m_{\lambda}< z\leq M_{\lambda},$
the diffusion process induced by tuple $(\lambda,z)$
is attracted to the right boundary.
Therefore, for $z$ with $m_{\lambda}< z< M_{\lambda},$
the diffusion process induced by tuple $(\lambda,z)$ is attracted to both boundaries.
For proof, see Appendix \ref{app:pf_main_thm}.

\begin{proposition} \label{prop:monotone}
$M_{\lambda}$ is a strictly decreasing function of $\lambda$ and $m_{\lambda}$ is a strictly increasing function of $\lambda$ for $\lambda\leq\overline{\beta}.$
\end{proposition}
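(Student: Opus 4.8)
The plan is to pass from the second-order equation to a first-order Riccati equation for the logarithmic derivative and to read off $M_{\lambda}$ and $m_{\lambda}$ as one-sided blow-up thresholds, whose monotonicity then follows from an elementary comparison principle. Writing $g=h'/h$ for a positive solution $h$ of $\mathcal{L}h=-\lambda h$ normalized by $h(\xi)=1$, the equation becomes the Riccati equation $g'=-g^{2}-\tfrac{2k}{\sigma^{2}}g+\tfrac{2(r-\lambda)}{\sigma^{2}}$ with $g(\xi)=h'(\xi)$. A positive solution on all of $(c,d)$ corresponds exactly to a solution $g$ that does not blow up on $(c,d)$, and the associated element of $\mathcal{C}$ is $(\lambda,g(\xi))$. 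Since the leading term of the Riccati field is $-g^{2}$, a solution followed to the left (decreasing $x$) can only escape to $+\infty$ in finite distance, while followed to the right it can only escape to $-\infty$; hence the set of initial slopes $a=g(\xi)$ whose leftward flow reaches $c$ is a ray $(-\infty,M_{\lambda}]$ and the set whose rightward flow reaches $d$ is a ray $[m_{\lambda},\infty)$. By Proposition \ref{prop:slice} and Theorem \ref{thm:main_thm} these thresholds are precisely the endpoints of the $\lambda$-slice, so it suffices to show that the left-threshold $M_{\lambda}$ is strictly decreasing and, by the mirror argument, that the right-threshold $m_{\lambda}$ is strictly increasing.

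For the comparison, fix $\lambda_{1}<\lambda_{2}\le\overline{\beta}$ and let $g_{1},g_{2}$ be the Riccati solutions with the \emph{same} initial value $g_{1}(\xi)=g_{2}(\xi)=a$. The difference $w=g_{2}-g_{1}$ satisfies a linear equation $w'=-A(x)\,w-\tfrac{2}{\sigma^{2}}(\lambda_{2}-\lambda_{1})$ with $w(\xi)=0$, whose explicit integrating-factor solution gives $w(x)>0$ for $x<\xi$; that is, leftward the $\lambda_{2}$-flow lies strictly above the $\lambda_{1}$-flow. Consequently, if $a$ is left-reachable for $\lambda_{2}$, so $g_{2}$ stays finite down to $c$, then $g_{1}\le g_{2}$ is bounded above and, being unable to escape to $-\infty$ leftward, also reaches $c$; thus $a$ is left-reachable for $\lambda_{1}$. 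This inclusion of reachable rays already yields $M_{\lambda_{1}}\ge M_{\lambda_{2}}$.

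Strictness is the delicate point and the main obstacle. Take $a=M_{\lambda_{2}}$ and let $G$ be the corresponding extreme (non-attracting) $\lambda_{2}$-solution, which reaches $c$. Fix an interior point $x_{0}\in(c,\xi)$; the previous step gives the strict gap $g_{1}^{(a)}(x_{0})<G(x_{0})$, so by continuous dependence on the initial slope over the compact interval $[x_{0},\xi]$ the $\lambda_{1}$-solution with slope $a+\epsilon$ still satisfies $g_{1}^{(a+\epsilon)}(x_{0})<G(x_{0})$ for small $\epsilon>0$. A no-crossing estimate then pins it down: at any putative first contact $x_{1}<x_{0}$ of $g_{1}^{(a+\epsilon)}$ with $G$ the two Riccati fields differ by $\tfrac{2}{\sigma^{2}}(\lambda_{2}-\lambda_{1})>0$, which is incompatible with the flows meeting, so $g_{1}^{(a+\epsilon)}<G$ on $(c,x_{0})$ and therefore remains finite down to $c$. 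Hence $a+\epsilon$ is left-reachable for $\lambda_{1}$, giving $M_{\lambda_{1}}\ge M_{\lambda_{2}}+\epsilon>M_{\lambda_{2}}$. The statement for $m_{\lambda}$ follows by the symmetric argument on $[\xi,d)$, where rightward the $\lambda_{2}$-flow lies strictly below the $\lambda_{1}$-flow and escape is only to $-\infty$. I expect the bulk of the work to lie in the first step, namely correctly matching the analytic thresholds $M_{\lambda},m_{\lambda}$ to the geometric endpoints of the slice and verifying the one-sided blow-up directions, together with the no-crossing estimate that upgrades the inclusion into strict monotonicity.
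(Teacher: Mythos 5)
Your proof is correct, and although it runs on the same engine as the paper's --- both arguments ultimately exploit the fact that a difference of logarithmic derivatives satisfies a Riccati-type equation whose forcing term $-\tfrac{2(\lambda_{2}-\lambda_{1})}{\sigma^{2}}$ has a definite sign --- the way you convert that comparison into monotonicity is genuinely different. The paper argues by contradiction: assuming a positive $\lambda$-solution with slope $M_{\delta}$ exists for $\delta<\lambda$, it sets $\Gamma=\tfrac{u'}{u}-\tfrac{v'}{v}$, shows $\Gamma>0$ to the left of $\xi$, and integrates the $\Gamma$-equation to conclude that the scale measure of the $(\delta,M_{\delta})$-diffusion is finite near $c$, i.e.\ that this diffusion is attracted to the left boundary, contradicting Theorem \ref{thm:main_thm}. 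You instead characterize $M_{\lambda}$ purely analytically as the left-survival threshold of the Riccati flow $g=h'/h$, get the weak inequality from boundedness together with the one-sided blow-up direction, and extract strictness by perturbing the initial slope by $\epsilon$ and running a first-contact (no-crossing) argument against the extremal $\lambda_{2}$-flow; I checked the sign at a putative first contact ($w'(x_{1})=\tfrac{2(\lambda_{1}-\lambda_{2})}{\sigma^{2}(x_{1})}<0$ against $w'(x_{1})\ge 0$) and it works. What your route buys: it needs only Proposition \ref{prop:slice} and not the probabilistic non-attraction statement, and by proving $M_{\lambda_{1}}\ge M_{\lambda_{2}}+\epsilon$ directly it sidesteps the small case distinction implicit in the paper's step ``no positive $\lambda$-solution has slope $M_{\delta}$, hence $M_{\delta}>M_{\lambda}$'' (one must also rule out $M_{\delta}<m_{\lambda}$). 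What the paper's route buys: the $\Gamma$-computation and the scale-measure estimate are precisely the tools reused in Theorem \ref{thm:main_thm} and in the lemma of Appendix \ref{app:pf_monotone_martingality}, so it recycles machinery already on the table. The two points you should write out carefully in a full version are the ones you flag yourself: that leftward escape of the Riccati flow at an interior point can only be to $+\infty$ (so the survival sets are rays), and that these rays are closed at their endpoints, which is exactly where Proposition \ref{prop:slice} enters.
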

\noindent
See Appendix \ref{app:pf_monotone} for proof. \newline

\begin{corollary} \label{cor:two_cases}
For $\overline{\beta},$ there are two possibilities:
\begin{enumerate}
\item[\textnormal{(i)}] there is a unique number $\overline{z}$ such that $(\overline{\beta},\overline{z})$ is in $\mathcal{C}.$
In this case, the tuple $(\overline{\beta},\overline{z})$ is the unique tuple in $\mathcal{C}$ such 
that the induced diffusion process is recurrent.
 
\item[\textnormal{(ii)}] there is an infinite number of $z$'s such that $(\overline{\beta},z)$ is in $\mathcal{C}.$
In this case, for any such tuple $(\overline{\beta},z)$ in $\mathcal{C},$ the induced diffusion process 
$X_{t}$ is transient.
\end{enumerate}
\end{corollary}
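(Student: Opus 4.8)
The plan is to identify the two listed possibilities with the dichotomy, already recorded in Section \ref{sec:transformed}, that at $\lambda=\overline{\beta}$ the equation $\mathcal{L}h=-\lambda h$ admits either one or two linearly independent positive solutions, and then to read off recurrence or transience from Theorem \ref{thm:main_thm} together with the recurrence criterion (recurrent if and only if both boundaries are non-attracting).

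First I would translate the count of positive solutions into the geometry of the $\overline{\beta}$-slice of $\mathcal{C}$. If, up to scaling, there is only one positive solution at $\overline{\beta}$, then after the normalization $h(\xi)=1$ there is exactly one admissible $h$, hence exactly one value $\overline{z}=h'(\xi)$, so $m_{\overline{\beta}}=M_{\overline{\beta}}=\overline{z}$. If instead there are two linearly independent positive solutions $h_{1},h_{2}$, both normalized by $h_{i}(\xi)=1$, then $h_{1}'(\xi)\neq h_{2}'(\xi)$ (otherwise $h_{1}$ and $h_{2}$ would share the same initial data and coincide by uniqueness for the linear ODE), so $m_{\overline{\beta}}<M_{\overline{\beta}}$; by Proposition \ref{prop:slice} the whole interval $[m_{\overline{\beta}},M_{\overline{\beta}}]$ lies in $\mathcal{C}$, giving infinitely many tuples. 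This shows that case (i) is exactly $m_{\overline{\beta}}=M_{\overline{\beta}}$ and case (ii) is exactly $m_{\overline{\beta}}<M_{\overline{\beta}}$.

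In case (i) I would establish recurrence and its uniqueness. Since $M_{\overline{\beta}}=m_{\overline{\beta}}=\overline{z}$, Theorem \ref{thm:main_thm} (applied at the left boundary through $M_{\overline{\beta}}$ and at the right boundary through $m_{\overline{\beta}}$) shows the process induced by $(\overline{\beta},\overline{z})$ is non-attracted to both boundaries, hence recurrent. For uniqueness over all of $\mathcal{C}$, fix $\lambda<\overline{\beta}$: Proposition \ref{prop:monotone} gives $m_{\lambda}<m_{\overline{\beta}}=M_{\overline{\beta}}<M_{\lambda}$, so $m_{\lambda}<M_{\lambda}$, and then Theorem \ref{thm:main_thm} shows every $z\in[m_{\lambda},M_{\lambda}]$ has at least one attracting boundary, so no such tuple is recurrent. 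Thus $(\overline{\beta},\overline{z})$ is the unique recurrent tuple in $\mathcal{C}$.

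In case (ii), by the reduction above the slice is the nondegenerate interval $[m_{\overline{\beta}},M_{\overline{\beta}}]\subset\mathcal{C}$. For any $z$ in this interval, Theorem \ref{thm:main_thm} exhibits an attracting boundary: the left one when $z<M_{\overline{\beta}}$ and the right one when $z>m_{\overline{\beta}}$, and since $m_{\overline{\beta}}<M_{\overline{\beta}}$ each endpoint falls into one of these two ranges. Hence some boundary is attracting for every admissible $z$, so the recurrence criterion forces the induced process to fail recurrence, i.e.\ to be transient. The main obstacle I anticipate is the first step, namely cleanly linking the analytic statement ``one versus two linearly independent positive solutions'' to the equality-versus-strict-inequality of $m_{\overline{\beta}}$ and $M_{\overline{\beta}}$, since this is where ODE uniqueness and Proposition \ref{prop:slice} must be combined; once that identification is made, the remainder is a direct bookkeeping of Theorem \ref{thm:main_thm} and the recurrence criterion.
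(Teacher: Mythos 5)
Your proposal is correct and takes essentially the same route the paper intends: the dichotomy follows from Proposition \ref{prop:slice} (the $\overline{\beta}$-slice is the interval $[m_{\overline{\beta}},M_{\overline{\beta}}]$, either a point or a continuum), recurrence in case (i) and transience in case (ii) follow from Theorem \ref{thm:main_thm} together with the criterion that recurrence is equivalent to both boundaries being non-attracting, and uniqueness in case (i) uses Proposition \ref{prop:monotone} exactly as you describe. The only cosmetic remark is that the opening reduction via ``one versus two linearly independent positive solutions'' is not strictly needed, since Proposition \ref{prop:slice} already gives the point-versus-interval dichotomy directly.
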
 
\noindent Refer to Section \ref{sec:applications} for examples of (i) and also see Appendix \ref{app:example_2nd}
for an example of (ii).
In Figure \ref{fig:candidate}, the left graph is the case of (i) and the right graph is the case of (ii).

\begin{figure}[ht]
\centering
\begin{minipage}[b]{0.45\linewidth}
\includegraphics[scale=0.65]{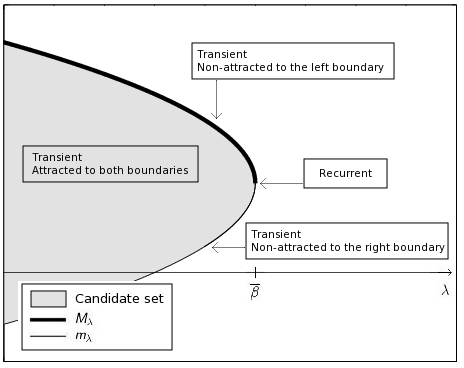}
%\caption{Happy Smiley}
%\label{fig:minipage1}
\end{minipage}
\quad
\begin{minipage}[b]{0.45\linewidth}
\includegraphics[scale=0.65]{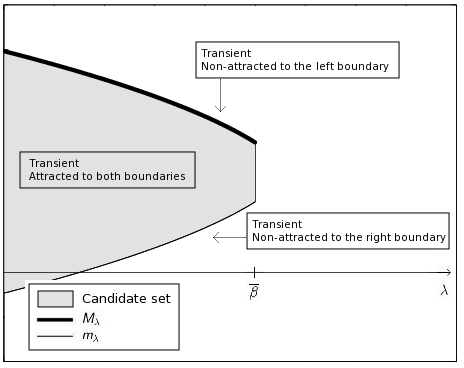}
%\caption{Sad Smiley}
%\label{fig:minipage2}
\end{minipage}
\caption{Candidate sets}
\label{fig:candidate}
\end{figure}

We now explore a particular subset of $\mathcal{C}.$
In general, for a candidate pair $(\lambda,h),$
$e^{\lambda t}\,h(X_{t})\,G_t^{-1}$
is a local martingale.
We are interested in pairs that induce martingales.

\begin{definition} Let $(\lambda,h)$ be a candidate pair.
We say $(\lambda,h'(\xi))$ is an admissible tuple or we say $(\lambda,h)$ is an admissible pair if 
$e^{\lambda t}\,h(X_{t})\,G_t^{-1}$ is a martingale.
Denote the set of the admissible tuples by $\mathcal{A}.$
$$\mathcal{A}:=\left.\left\{(\lambda,h'(\xi))\in\mathbb{R}^{2}\right|
(\lambda,h) \text{ is a candidate pair and }
e^{\lambda t}\,h(X_{t})\,G_t^{-1}\text{ is a martingale}
\right\}.$$
\end{definition}

\noindent One of the main purposes of this section is to investigate the graphical properties
of $\mathcal{A}$ with the notion of recurrence and transience. 
It is noteworthy that $\mathcal{C}$ and $\mathcal{A}$ depend on $\xi.$

\begin{proposition} \label{prop:connected_martingality}
Let $\lambda\leq\overline{\beta}$ and let $m_{\lambda}<z< M_{\lambda}.$
If two tuples $(\lambda,m_\lambda)$ and $(\lambda,M_\lambda)$ are in $\mathcal{A},$ then 
$(\lambda,z)$ is in $\mathcal{A}.$
If at least one of $(\lambda,m_\lambda)$ and $(\lambda,M_\lambda)$ is not in $\mathcal{A},$ then 
$(\lambda,z)$ is not in $\mathcal{A}.$
\end{proposition}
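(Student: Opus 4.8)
The plan is to exploit the linearity of the ODE $\mathcal{L}h=-\lambda h$ together with the elementary fact that a nonnegative local martingale is a true martingale precisely when its expectation is constant in time. Fix $\lambda\leq\overline{\beta}$ and write $h_M$ and $h_m$ for the candidate functions corresponding to the endpoints $M_\lambda$ and $m_\lambda$; both satisfy $h(\xi)=1$. For $m_\lambda<z<M_\lambda$ set $\theta:=(z-m_\lambda)/(M_\lambda-m_\lambda)\in(0,1)$ and let $h_z$ be the candidate function with $h_z'(\xi)=z$, which exists by Proposition \ref{prop:slice}. The first step is to observe that $\theta h_M+(1-\theta)h_m$ solves the same linear ODE, takes the value $1$ at $\xi$, and has derivative $\theta M_\lambda+(1-\theta)m_\lambda=z$ at $\xi$; by uniqueness of solutions to a second-order linear ODE with prescribed initial data, it coincides with $h_z$. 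Thus $h_z=\theta h_M+(1-\theta)h_m$, i.e. the candidate functions on a fixed $\lambda$-slice depend affinely on $z$.

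Next I would transfer this affine structure to the associated processes. Writing $Y_t^{(\cdot)}:=e^{\lambda t}\,h_{(\cdot)}(X_t)\,G_t^{-1}$, linearity in $h$ gives $Y_t^z=\theta Y_t^M+(1-\theta)Y_t^m$ for every $t$. Each of $Y^z,Y^M,Y^m$ is a local martingale under $\mathbb{Q}$ (as recalled just before the definition of $\mathcal{A}$) and is nonnegative because $h>0$ and $G>0$; moreover each starts at $Y_0=h(\xi)\,G_0^{-1}=1$ since $h(\xi)=1$ and $G_0=1$. Hence each is a nonnegative supermartingale with $\mathbb{E}^{\mathbb{Q}}[Y_t^{(\cdot)}]\leq 1$, and it is a martingale (equivalently, the corresponding tuple lies in $\mathcal{A}$) if and only if $\mathbb{E}^{\mathbb{Q}}[Y_t^{(\cdot)}]=1$ for all $t$.

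The conclusion then follows by linearity of expectation. If both $(\lambda,m_\lambda)$ and $(\lambda,M_\lambda)$ are in $\mathcal{A}$, then $\mathbb{E}^{\mathbb{Q}}[Y_t^M]=\mathbb{E}^{\mathbb{Q}}[Y_t^m]=1$ for all $t$, so $\mathbb{E}^{\mathbb{Q}}[Y_t^z]=\theta+(1-\theta)=1$; a nonnegative supermartingale with constant expectation is a martingale, whence $(\lambda,z)\in\mathcal{A}$. Conversely, suppose one endpoint is not in $\mathcal{A}$; say $(\lambda,M_\lambda)\notin\mathcal{A}$, the other case being symmetric. Then $\mathbb{E}^{\mathbb{Q}}[Y_{t_0}^M]<1$ for some $t_0>0$, while $\mathbb{E}^{\mathbb{Q}}[Y_{t_0}^m]\leq 1$ by the supermartingale property. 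Since $\theta\in(0,1)$ is strictly positive, $\mathbb{E}^{\mathbb{Q}}[Y_{t_0}^z]=\theta\,\mathbb{E}^{\mathbb{Q}}[Y_{t_0}^M]+(1-\theta)\,\mathbb{E}^{\mathbb{Q}}[Y_{t_0}^m]<1$, so $Y^z$ is not a martingale and $(\lambda,z)\notin\mathcal{A}$.

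The only nonroutine point is the characterization that a nonnegative local martingale is a true martingale if and only if its expectation stays at its initial value, which I would justify by the standard supermartingale argument (Fatou's lemma yields the supermartingale property, and a supermartingale with constant expectation is a martingale). Everything else reduces to the affine dependence of $h_z$ on $z$ and linearity of the $\mathbb{Q}$-expectation; the strictness $\theta\in(0,1)$ is precisely what propagates the failure of martingality in the second half.
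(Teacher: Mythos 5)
Your proof is correct and follows essentially the same route as the paper's: both write the candidate function for $(\lambda,z)$ as a convex combination $h_z=\theta h_M+(1-\theta)h_m$, then use linearity of expectation together with the fact that a nonnegative local martingale is a supermartingale, and is a true martingale exactly when its expectation is constant. Your version merely makes explicit two points the paper leaves implicit — the identification of $h_z$ with the convex combination via ODE uniqueness, and the strictness $\theta\in(0,1)$ needed to propagate the failure of martingality.
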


\noindent See Appendix \ref{app:pf_prop:connected_martingality} for proof. \newline

\begin{proposition}\label{prop:monotone_martingality}
Let $\delta<\lambda\leq\overline{\beta}.$ If $(\delta,M_\delta)$ is in $\mathcal{A},$ then $(\lambda,M_\lambda)$ is also in $\mathcal{A}.$
Similarly, if $(\delta,m_\delta)$ is in $\mathcal{A},$ then $(\lambda,m_\lambda)$ is also in $\mathcal{A}.$
\end{proposition}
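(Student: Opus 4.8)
The plan is to handle the two claims by the left--right symmetry of the interval $I=(c,d)$ and to concentrate on the statement for $M_\lambda$; the statement for $m_\lambda$ follows by interchanging the roles of the two endpoints. Write $h_\delta,h_\lambda$ for the normalized candidate functions with $h_\delta'(\xi)=M_\delta$, $h_\lambda'(\xi)=M_\lambda$, set $N^\delta_t:=e^{\delta t}h_\delta(X_t)G_t^{-1}$, $N^\lambda_t:=e^{\lambda t}h_\lambda(X_t)G_t^{-1}$, and $\nu:=\lambda-\delta>0$. By Theorem~\ref{thm:main_thm} both diffusions induced by $(\delta,M_\delta)$ and by $(\lambda,M_\lambda)$ are non-attracted to the left boundary $c$, so $c$ is inaccessible for each of them and the only endpoint at which mass can be lost is $d$. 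I would then invoke the martingale criterion for one-dimensional diffusions (Theorem~\ref{thm:criterion_martingality}) in the form: a candidate tuple is admissible if and only if its induced diffusion does not reach the boundary of $I$ in finite time. Since $c$ is already excluded, admissibility of $(\delta,M_\delta)$ and of $(\lambda,M_\lambda)$ reduces to non-explosion at $d$, i.e.\ to divergence of the Feller integral $w_\bullet(d):=\int^d m_\bullet((x_0,y])\,dS_\bullet(y)$, where $S_\bullet,m_\bullet$ are the scale and speed measures of the corresponding induced diffusion. Thus the whole proposition reduces to the implication $w_\delta(d)=\infty\Rightarrow w_\lambda(d)=\infty$.

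The bridge between the two induced diffusions is the ratio $g:=h_\lambda/h_\delta>0$, normalized so that $g(\xi)=1$. A direct computation using $\mathcal{L}h_\lambda=-\lambda h_\lambda$ and $\mathcal{L}h_\delta=-\delta h_\delta$ shows that $g$ is a positive eigenfunction of the generator $\mathcal{L}^\delta:=\tfrac12\sigma^2\partial^2+(k+\sigma^2 h_\delta' h_\delta^{-1})\partial$ of the $(\delta,M_\delta)$-diffusion, namely $\mathcal{L}^\delta g=-\nu g$. Consequently the scale and speed densities transform multiplicatively, $dS_\lambda=g^{-2}\,dS_\delta$ and $dm_\lambda=g^{2}\,dm_\delta$. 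The analytic input I need is the behaviour of $g$ near $d$, and I would extract it probabilistically. Passing to the probability measure $\mathbb{P}^\delta$ defined by $N^\delta_t$ (a true martingale, since $(\delta,M_\delta)\in\mathcal{A}$), the state variable $X$ becomes the $(\delta,M_\delta)$-diffusion, and $R_t:=N^\lambda_t/N^\delta_t=e^{\nu t}g(X_t)$ is a nonnegative $\mathbb{P}^\delta$-local martingale, hence a supermartingale, hence convergent $\mathbb{P}^\delta$-a.s.\ to a finite limit. Because $\delta<\overline{\beta}$ forces $M_\delta>m_\delta$ (Proposition~\ref{prop:monotone}), the $(\delta,M_\delta)$-diffusion is attracted to $d$, non-attracted to $c$, and conservative, so $X_t\to d$ a.s.; as $e^{\nu t}\to\infty$, finiteness of $\lim R_t$ forces $g(X_t)\to0$ and hence $g(y)\to0$ as $y\to d$. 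Being a positive solution of a second-order linear equation, $g$ is eventually monotone, so it decreases to $0$ on some interval $(x_1,d)$.

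With $g$ decreasing on $(x_1,d)$ the Feller comparison is immediate: for $y\in(x_1,d)$ monotonicity gives $\int_{x_0}^y g^2\,dm_\delta\ge g^2(y)\,m_\delta((x_1,y])$, whence
\[
w_\lambda(d)=\int_{x_0}^d g^{-2}(y)\Big(\int_{x_0}^y g^2\,dm_\delta\Big)\,dS_\delta(y)\ \ge\ \int_{x_1}^d m_\delta((x_1,y])\,dS_\delta(y).
\]
Since $d$ is attracting for the $(\delta,M_\delta)$-diffusion we have $S_\delta([\xi,d))<\infty$, so the split $m_\delta((x_0,y])=m_\delta((x_0,x_1])+m_\delta((x_1,y])$ shows the constant part contributes a finite amount and the divergence $w_\delta(d)=\infty$ must come entirely from the tail $\int_{x_1}^d m_\delta((x_1,y])\,dS_\delta(y)=\infty$. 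Hence $w_\lambda(d)=\infty$, the $(\lambda,M_\lambda)$-diffusion does not explode at $d$, and $(\lambda,M_\lambda)\in\mathcal{A}$. The statement for $m$ is proved verbatim after reflecting the interval, using that $(\delta,m_\delta)$ and $(\lambda,m_\lambda)$ are non-attracted to $d$ and that the relevant ratio decays to $0$ at $c$.

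I expect the main obstacle to be the rigorous passage from the almost-sure statement $g(X_t)\to0$ to the deterministic, eventually monotone decay $g(y)\to0$ as $y\to d$: one must rule out oscillation of $g$ and confirm that the process genuinely exhausts every neighbourhood of $d$, which is exactly where the conservativeness and non-attraction to $c$ of the $(\delta,M_\delta)$-diffusion enter in full strength. A secondary point to dispose of is the boundary case $\lambda=\overline{\beta}$ in the recurrent situation (i) of Corollary~\ref{cor:two_cases}, where $M_{\overline{\beta}}=m_{\overline{\beta}}$ and admissibility of $(\overline{\beta},M_{\overline{\beta}})$ holds directly because a recurrent induced diffusion is conservative.
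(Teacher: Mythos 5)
Your argument is correct, and while it shares the paper's skeleton---reduce admissibility to non-explosion via Theorem~\ref{thm:criterion_martingality} and dispose of the left endpoint by non-attraction (Theorem~\ref{thm:main_thm})---it treats the right endpoint by a genuinely different mechanism. The paper proves a global lemma, $h_\delta' h_\delta^{-1} > h_\lambda' h_\lambda^{-1}$ on all of $I$ (equivalently, your ratio $g=h_\lambda/h_\delta$ is decreasing everywhere), by a Riccati argument on the difference of logarithmic derivatives, and then invokes the pathwise comparison theorem for SDEs to dominate the $(\lambda,M_\lambda)$-diffusion by the non-exploding $(\delta,M_\delta)$-diffusion. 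You need only the localized statement that $g$ is eventually decreasing near $d$, which you extract probabilistically (supermartingale convergence of $e^{\nu t}g(X_t)$ under the $(\delta,M_\delta)$-measure, together with $X_t\to d$ a.s.\ there), and you then conclude analytically through Feller's explosion test and the exact transformation $dS_\lambda=g^{-2}\,dS_\delta$, $dm_\lambda=g^{2}\,dm_\delta$. The paper's route is shorter once its lemma is granted, but that lemma's Riccati step (``if $\Gamma$ ever gets close to $0$ the forcing term dominates'') is left informal; your route trades it for a soft convergence argument and an explicit integral comparison, at the price of invoking the boundary classification of the $(\delta,M_\delta)$-diffusion (attraction to $d$ from $M_\delta>m_\delta$, conservativeness from admissibility). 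The two obstacles you flag are not real gaps: oscillation of $g$ is impossible because at any critical point of a positive solution of $\mathcal{L}^\delta g=-\nu g$ with $\nu>0$ one has $\tfrac12\sigma^2 g''=-\nu g<0$, so every critical point is a strict local maximum and $g$ is eventually monotone (necessarily decreasing, since $g(X_t)\to 0$ along a path that exhausts a neighbourhood of $d$); and the case $\lambda=\overline{\beta}$ needs no separate treatment, since your lower bound $w_\lambda(d)\geq\int_{x_1}^{d}m_\delta((x_1,y])\,dS_\delta(y)=\infty$ holds regardless of whether the $(\overline{\beta},M_{\overline{\beta}})$-diffusion is recurrent or transient.
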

\noindent Refer to Appendix \ref{app:pf_monotone_martingality} for proof.

\begin{proposition}
Consider the case of \textnormal{(i)} in Corollary \ref{cor:two_cases}. For $\overline{\beta},$ suppose that 
there is a unique number $\overline{z}$ such that $(\overline{\beta},\overline{z})$ is in $\mathcal{C}.$
Then  $(\overline{\beta},\overline{z})$ is in $\mathcal{A}.$ 
\end{proposition}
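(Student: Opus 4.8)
The plan is to show that the unique candidate pair sitting at the top of the spectrum is admissible by reducing the martingale property of the associated local martingale to non-explosion of the induced diffusion, and then feeding in recurrence. Set $Z_t:=e^{\overline{\beta}t}\,h(X_t)\,G_t^{-1}$, where $(\overline{\beta},h)$ is the candidate pair with $h'(\xi)=\overline{z}$; by the It\^o computation of Section~\ref{sec:transformed} and $\mathcal{L}h=-\overline{\beta}h$, this is a nonnegative $\mathbb{Q}$-local martingale with $Z_0=1$ (recall $h(\xi)=1$) having the exponential representation $dZ_t=(\sigma h'h^{-1}-v)(X_t)\,Z_t\,dW_t$. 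Since a nonnegative local martingale is a supermartingale, $Z$ is a true martingale precisely when $E^{\mathbb{Q}}[Z_T]=1$ for every $T$; this is the single identity I aim to establish, and it is exactly the condition defining membership in $\mathcal{A}$. The reduction begins with the hypothesis: uniqueness of $\overline{z}$ forces $M_{\overline{\beta}}=m_{\overline{\beta}}=\overline{z}$, so by Corollary~\ref{cor:two_cases}(i) the diffusion induced by $(\overline{\beta},\overline{z})$ is recurrent. By the recurrence criterion of Section~\ref{sec:math_preli} both boundaries are then non-attracting, hence $\mathrm{Prob}(\lim_{t\to\tau}X_t=c)=\mathrm{Prob}(\lim_{t\to\tau}X_t=d)=0$ for the induced diffusion; because a one-dimensional diffusion can explode only by converging to one of its endpoints, the explosion time $\tau=\lim_n\tau_n$ of the induced diffusion is infinite almost surely under its own law.

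Next I would connect $E^{\mathbb{Q}}[Z_T]$ to this non-explosion through a localized change of measure along $\tau_n=\inf\{t>0:X_t\notin(c_n,d_n)\}$. On $[0,\tau_n]$ the path $X$ stays in the compact interval $[c_n,d_n]$, where $(\sigma h'h^{-1}-v)$ is continuous and hence bounded, so Novikov's criterion shows the stopped process $Z^{\tau_n}$ is a genuine $\mathbb{Q}$-martingale and $E^{\mathbb{Q}}[Z_{T\wedge\tau_n}]=1$. Define $\mathbb{P}_n$ by $d\mathbb{P}_n/d\mathbb{Q}|_{\mathcal{F}_t}=Z_{t\wedge\tau_n}$; by Girsanov and Proposition~\ref{prop:dynamics_under_P}, under $\mathbb{P}_n$ the process $X$ solves the induced SDE~\eqref{eqn:X_under_P} up to $\tau_n$. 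These measures are consistent on the $\mathcal{F}_{\tau_n}$ and extend (a Föllmer-type construction on path space) to a measure $\mathbb{P}$ under which $X$ is exactly the induced diffusion with explosion time $\tau$. Restricting the density to the event $\{\tau_n>T\}$, on which $T\wedge\tau_n=T$ and thus $Z_{T\wedge\tau_n}=Z_T$, gives
\begin{equation*}
\mathbb{P}(\tau_n>T)=E^{\mathbb{Q}}\!\left[Z_T\,\mathbf{1}_{\{\tau_n>T\}}\right].
\end{equation*}
Letting $n\to\infty$, the left side increases to $\mathbb{P}(\tau>T)$, while on the right the endpoints are unattainable under $\mathbb{Q}$ by Assumption~\ref{assume:X}, so $\tau=\infty$ holds $\mathbb{Q}$-almost surely and $\mathbf{1}_{\{\tau_n>T\}}\uparrow 1$; monotone convergence then yields the key identity $E^{\mathbb{Q}}[Z_T]=\mathbb{P}(\tau>T)$.

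Combining the two steps finishes the argument: the induced diffusion is recurrent, so $\mathbb{P}(\tau>T)=1$ for every $T$, and therefore $E^{\mathbb{Q}}[Z_T]=1$ for all $T$. Hence $Z_t=e^{\overline{\beta}t}\,h(X_t)\,G_t^{-1}$ is a true $\mathbb{Q}$-martingale, which is precisely the statement that $(\overline{\beta},\overline{z})\in\mathcal{A}$. I expect the main obstacle to be the middle step: making rigorous the consistent-family extension that produces the induced-diffusion law $\mathbb{P}$ and justifying the limit in the localization so as to obtain $E^{\mathbb{Q}}[Z_T]=\mathbb{P}(\tau>T)$, rather than the softer spectral and recurrence inputs, which are already supplied by Corollary~\ref{cor:two_cases} and the criteria of Section~\ref{sec:math_preli}. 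The only delicate point beyond this is confirming that the $\tau_n$ genuinely localize $Z$ under $\mathbb{Q}$, which the boundedness of $(\sigma h'h^{-1}-v)$ on each $[c_n,d_n]$ secures via Novikov's criterion.
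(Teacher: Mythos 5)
Your argument is correct, but it is genuinely different from what the paper does: the paper's ``proof'' of this proposition is simply a citation to Pinsky's book, whereas you give a self-contained derivation. Your route is to (a) identify admissibility with the identity $E^{\mathbb{Q}}[Z_T]=1$ for the nonnegative exponential local martingale $Z_t=e^{\overline{\beta}t}h(X_t)G_t^{-1}$, (b) prove the transfer identity $E^{\mathbb{Q}}[Z_T]=\mathbb{P}(\tau>T)$ by localizing along the exit times $\tau_n$ (Novikov on compacts, Girsanov, consistency of the localized measures, and $\tau_n\uparrow\infty$ $\mathbb{Q}$-a.s.\ from unattainability of the endpoints in Assumption \ref{assume:X}), and (c) feed in recurrence of the induced diffusion from Corollary \ref{cor:two_cases}(i) together with Propositions in Section \ref{sec:math_preli} to get $\mathbb{P}(\tau>T)=1$. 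Step (b) is in effect a from-scratch proof of the paper's own Theorem \ref{thm:criterion_martingality} (the ``martingale iff the induced diffusion does not explode'' criterion, which the paper proves in Appendix \ref{app:pf_monotone_martingality} by first changing to the auxiliary measure $\mathbb{L}$ and then citing Pinsky, pp.\ 212 and 215); you could have shortened your argument considerably by invoking that theorem directly and only supplying the observation that recurrence forces non-explosion. What your approach buys is transparency and independence from the external reference; what it costs is that the F\"ollmer-type extension and the weak uniqueness of the induced SDE on compacts (needed to identify $\mathbb{P}_n(\tau_n>T)$ with the corresponding probability for the canonically constructed induced diffusion) are real technical points you flag but do not fully execute --- they are standard for coefficients that are $C^1$ with $\sigma>0$ on $I$, so this is a matter of exposition rather than a gap.
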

\noindent For proof, see \cite{Pinsky}.
Figure \ref{fig:admissible} 
displays two examples of admissible set and summarizes this section.

\begin{figure}[ht]
\centering
\begin{minipage}[b]{0.45\linewidth}
\includegraphics[scale=0.65]{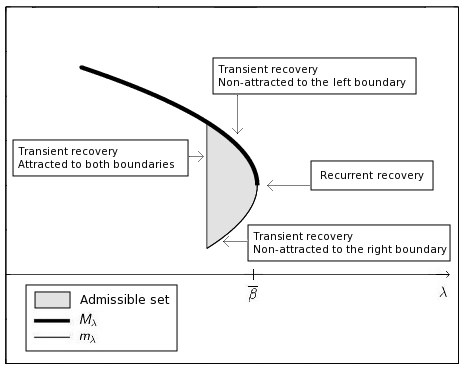}
%\caption{Happy Smiley}
%\label{fig:minipage1}
\end{minipage}
\quad
\begin{minipage}[b]{0.45\linewidth}
\includegraphics[scale=0.65]{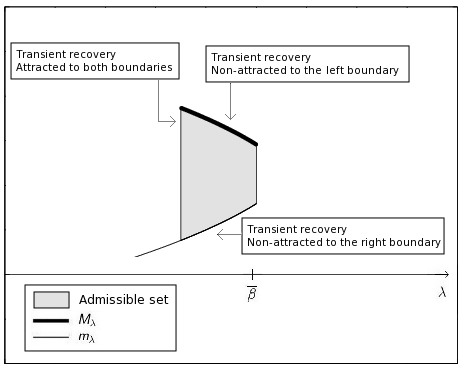}
%\caption{Sad Smiley}
%\label{fig:minipage2}
\end{minipage}
\caption{Admissible sets}
\label{fig:admissible}
\end{figure}

\section{Recurrent and transient recovery}
\label{sec:recovery_theory}

\subsection{Recurrent recovery}
\label{sec:recurrent_recovery}
We review recovery theory
with the assumption that
$X_{t}$ is {\em recurrent} under the objective measure.
With this assumption, we can successfully recover the objective measure from the risk-neutral measure.
This theory is especially useful when the state variable $X_t$ is an interest rate process because the actual dynamics of an interest rate is usually recurrent or mean-reverting in the actual real-world measure.
\begin{proposition} \label{prop:recurrence}
If it exists, there is a unique admissible pair $(\overline{\beta},\overline{\phi})$ of $\mathcal{L}
\overline{\phi}=-\overline{\beta}\,\overline{\phi}$ with
$\overline{\phi}>0$ 
such that $X_{t}$ is recurrent under the transformed measure with respect 
to the pair $(\overline{\beta},\overline{\phi}).$
In this case, we have
$\overline{\beta}=\max\{\; \lambda \;|\;(\lambda,z)\in\mathcal{A} \;\}\,.$
\end{proposition}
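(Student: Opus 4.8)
The plan is to read the recurrence condition directly off the boundary classification already established and then to pin down the single value of $\lambda$ at which it can hold. Recall from Section \ref{sec:math_preli} that $X_t$ is recurrent under the transformed measure precisely when both boundaries $c$ and $d$ are non-attracting. By Theorem \ref{thm:main_thm}, for a fixed $\lambda\le\overline{\beta}$ the induced diffusion is non-attracted to the left boundary only for the tuple $(\lambda,M_\lambda)$ and non-attracted to the right boundary only for $(\lambda,m_\lambda)$; for every other $z$ in the slice $[m_\lambda,M_\lambda]$ it is attracted to at least one boundary. Hence a candidate pair $(\lambda,h)$ induces a recurrent diffusion if and only if $h'(\xi)=M_\lambda=m_\lambda$, i.e. the $\lambda$-slice of $\mathcal{C}$ collapses to the single point $M_\lambda=m_\lambda$.

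First I would use the monotonicity of Proposition \ref{prop:monotone} to locate where this collapse can occur. Since $M_\lambda$ is strictly decreasing and $m_\lambda$ strictly increasing on $\{\lambda\le\overline{\beta}\}$, and $m_{\overline{\beta}}\le M_{\overline{\beta}}$, for any $\lambda<\overline{\beta}$ one obtains $m_\lambda<m_{\overline{\beta}}\le M_{\overline{\beta}}<M_\lambda$, so $m_\lambda<M_\lambda$ strictly and recurrence is impossible. Thus the recurrent tuple can only occur at $\lambda=\overline{\beta}$, and by Corollary \ref{cor:two_cases} this happens exactly in case (i), where $m_{\overline{\beta}}=M_{\overline{\beta}}=\overline{z}$; in case (ii) every induced diffusion is transient, so no recurrent pair exists, which is why the statement is hedged with ``if it exists.'' This already forces uniqueness of the recurrent candidate pair, namely $(\overline{\beta},\overline{z})$.

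It then remains to promote this candidate pair to an admissible one and to identify $\overline{\beta}$ with the maximal $\lambda$ over $\mathcal{A}$. For admissibility I would invoke the proposition immediately preceding, which asserts that in case (i) the pair $(\overline{\beta},\overline{z})$ lies in $\mathcal{A}$; combined with the inclusion $\mathcal{A}\subseteq\mathcal{C}$ (every admissible pair is a candidate pair), this makes $(\overline{\beta},\overline{z})$ the unique admissible pair whose induced diffusion is recurrent. For the maximum formula, the same inclusion $\mathcal{A}\subseteq\mathcal{C}$ gives $\max\{\lambda\mid(\lambda,z)\in\mathcal{A}\}\le\overline{\beta}$, while $(\overline{\beta},\overline{z})\in\mathcal{A}$ shows the value $\overline{\beta}$ is actually attained in $\mathcal{A}$; hence equality holds.

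Because Theorem \ref{thm:main_thm} and Proposition \ref{prop:monotone} carry the analytic weight, this statement is essentially a bookkeeping consequence of the graphical picture, and the only genuine external input is the martingale (admissibility) property of $(\overline{\beta},\overline{z})$, which I would cite from \cite{Pinsky} rather than reprove. The one place demanding care is the monotonicity step: I must check that strict monotonicity of $M_\lambda$ and $m_\lambda$ yields a strict gap $m_\lambda<M_\lambda$ for all $\lambda<\overline{\beta}$, since it is exactly this strictness that rules out recurrence away from $\overline{\beta}$ and thereby delivers uniqueness.
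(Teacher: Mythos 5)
Your proposal is correct and follows exactly the route the paper intends: the paper gives no explicit proof, merely remarking that the proposition "is easily obtained from Section \ref{sec:graphical_understaing}," and your argument is precisely the spelling-out of that remark (recurrence forces $m_\lambda=M_\lambda$ via Theorem \ref{thm:main_thm}, strict monotonicity from Proposition \ref{prop:monotone} confines this to $\lambda=\overline{\beta}$, and admissibility plus $\mathcal{A}\subseteq\mathcal{C}$ yield the max formula). No gaps.
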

\noindent This proposition is easily obtained from Section \ref{sec:graphical_understaing} and gives the following theorem.

\begin{theorem}\label{thm:recurrent_recovery}
\textnormal{(Recurrent recovery)}
Suppose $X_{t}$ is recurrent under the objective measure $\mathbb{P}.$ We can then recover 
the objective measure $\mathbb{P}$ from the risk-neutral measure $\mathbb{Q}.$
\end{theorem}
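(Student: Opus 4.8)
The plan is to show that recurrence singles out the principal pair uniquely among all admissible pairs, so that an observer who knows $\mathbb{Q}$ together with the four functions $b,\sigma,r,v$ — hence $k=b-v\sigma$, the operator $\mathcal{L}$, and the martingality condition defining $\mathcal{A}$ — can reconstruct $(\beta,\phi)$ and therefore $\mathbb{P}$. The entire argument reduces to identifying the principal pair with the distinguished pair produced by Proposition \ref{prop:recurrence}; the real analytic content is already absorbed into that proposition, which I may invoke.

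First I would verify that the principal pair is admissible. By Assumption \ref{assume:transition_indep} we have $L_t=e^{\beta t}\phi(X_t)\phi^{-1}(\xi)$, so that $\Sigma_t^{-1}=L_t G_t^{-1}=e^{\beta t}\phi(X_t)\phi^{-1}(\xi)G_t^{-1}$ is precisely the density $\left.\frac{d\mathbb{P}}{d\mathbb{Q}}\right|_{\mathcal{F}_t}$. Since $\mathbb{P}$ is a genuine probability measure equivalent to $\mathbb{Q}$, this density is a true $\mathbb{Q}$-martingale and not merely a local one; hence the tuple $(\beta,\phi'(\xi))$ lies in $\mathcal{A}$. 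This martingale property is the only genuinely analytic point at this level, and here it comes for free exactly because $\mathbb{P}$ was assumed to exist as a probability measure.

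Next, the transformed measure with respect to the principal pair is $\mathbb{P}$ itself, as noted after the definition of the transformed measure, so the diffusion induced by $(\beta,\phi)$ is simply $X_t$ under $\mathbb{P}$. By hypothesis this process is recurrent. Thus $(\beta,\phi)$ is an admissible pair whose induced diffusion is recurrent, and Proposition \ref{prop:recurrence} guarantees that such a pair is unique and that its first coordinate equals $\overline{\beta}=\max\{\lambda\mid(\lambda,z)\in\mathcal{A}\}$. Consequently $\beta=\overline{\beta}$, and $(\beta,\phi)$ must coincide with the unique recurrent admissible pair sitting at the top of $\mathcal{A}$; the monotonicity supplied by Propositions \ref{prop:monotone} and \ref{prop:monotone_martingality} confirms that the recurrent pair really occurs at $\lambda=\overline{\beta}$ rather than at some smaller $\lambda$.

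Finally I would describe the recovery itself. From the observable data the agent forms $\mathcal{L}$ and the admissible set $\mathcal{A}$, locates its maximal eigenvalue $\overline{\beta}$, and selects at $\lambda=\overline{\beta}$ the admissible pair whose induced diffusion is recurrent. By the uniqueness in Proposition \ref{prop:recurrence} this pair must be the principal pair $(\beta,\phi)$, whereupon setting $\left.\frac{d\mathbb{P}}{d\mathbb{Q}}\right|_{\mathcal{F}_t}=e^{\beta t}\phi(X_t)\phi^{-1}(\xi)G_t^{-1}$ recovers $\mathbb{P}$ completely. The main conceptual obstacle is therefore not in this theorem at all: once recurrence is available as a selection criterion, the theorem is a direct corollary of the graphical structure of Section \ref{sec:graphical_understaing}, and the one step demanding genuine care — the uniqueness of the recurrent admissible pair — has already been established in Proposition \ref{prop:recurrence}.
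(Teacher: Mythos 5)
Your proposal is correct and follows essentially the same route as the paper: the paper derives Theorem \ref{thm:recurrent_recovery} directly from Proposition \ref{prop:recurrence} (uniqueness of the recurrent admissible pair at $\lambda=\overline{\beta}$), exactly as you do, with the observation that $e^{\beta t}\phi(X_t)\phi^{-1}(\xi)G_t^{-1}=\Sigma_t^{-1}$ is a true $\mathbb{Q}$-martingale because $\mathbb{P}$ is an equivalent probability measure. Your write-up merely makes explicit the admissibility check and the identification step that the paper leaves implicit.
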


\subsection{Transient recovery}
\label{sec:transient_recovery}
We encounter several conditions under which we can recover the objective 
measure when $X_t$ is transient under the objective measure.
The state variable $X_t$ is always transient under the transformed measure with respect to an admissible pair $(\lambda,h)$ for any $\lambda<\overline{\beta}.$
Therefore,
without further information,
recovery is impossible when the
process $X_{t}$ is transient under the objective measure.

We assume that we know the value $\beta.$
Despite knowing this value $\beta,$ we cannot achieve recovery in general.
However, the following theorem says that recovery is possible under some circumstances.
The proof is direct from Proposition \ref{prop:connected_martingality}.

\begin{theorem}
\textnormal{(Transient recovery)}
Suppose we know the value $\beta.$
If only one of $(\beta,m_\beta)$ and $(\beta,M_\beta)$ is an admissible tuple, then
we can recover the objective measure $\mathbb{P}$ from the risk-neutral measure $\mathbb{Q}.$
\end{theorem}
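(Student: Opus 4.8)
The plan is to exploit the fact that the principal pair is itself admissible, and then use Proposition \ref{prop:connected_martingality} to pin down a \emph{unique} admissible tuple inside the $\beta$-slice of $\mathcal{C}$; knowing the value $\beta$ is precisely what allows us to restrict attention to that slice.

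First I would verify that the principal tuple lies in $\mathcal{A}$. By Assumption \ref{assume:transition_indep} the reciprocal of the pricing kernel is $L_t=e^{\beta t}\phi(X_t)\phi^{-1}(\xi)$, and the objective measure is obtained from $\mathbb{Q}$ through the Radon-Nikodym derivative $\Sigma_t^{-1}=e^{\beta t}\phi(X_t)\phi^{-1}(\xi)G_t^{-1}$. Since $\mathbb{P}$ is a genuine probability measure equivalent to $\mathbb{Q}$, this density process is a true $\mathbb{Q}$-martingale. After the harmless normalization $\phi(\xi)=1$ (only the ratio $\phi(X_t)\phi^{-1}(\xi)$ enters the density, so the measure is unchanged), the pair $(\beta,\phi)$ is a candidate pair for which $e^{\beta t}\phi(X_t)G_t^{-1}$ is a martingale; that is, the principal tuple $(\beta,\phi'(\xi))$ belongs to $\mathcal{A}$, and in particular $m_\beta\le\phi'(\xi)\le M_\beta$.

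Next I would invoke Proposition \ref{prop:connected_martingality} within the $\beta$-slice. By hypothesis at least one of the endpoints $(\beta,m_\beta)$ and $(\beta,M_\beta)$ fails to lie in $\mathcal{A}$, so the second assertion of that proposition forces every interior tuple $(\beta,z)$ with $m_\beta<z<M_\beta$ to be non-admissible. Hence any admissible tuple with first coordinate $\beta$ must be one of the two endpoints, and since exactly one endpoint is admissible by assumption, there is precisely one admissible tuple in the $\beta$-slice. Combining this with the previous step, the principal tuple — being admissible with first coordinate $\beta$ — must coincide with this unique tuple, so $\phi'(\xi)$ is determined (it equals whichever of $m_\beta,M_\beta$ is admissible). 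Together with $\phi(\xi)=1$ this determines the solution $\phi$ of $\mathcal{L}\phi=-\beta\phi$ uniquely, recovering the principal pair $(\beta,\phi)$ and therefore $\mathbb{P}$ via the Radon-Nikodym derivative above.

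The step I expect to carry the real weight is the first one: recognizing that the principal tuple genuinely sits in $\mathcal{A}$ rather than merely in $\mathcal{C}$. Everything after that is the direct bookkeeping application of Proposition \ref{prop:connected_martingality} the author advertises, but the argument works only because $\mathbb{P}$ is a bona fide measure, which forces its defining density to be a martingale and thereby locates the principal tuple inside the admissible set. I would take care to confirm that the normalization $\phi(\xi)=1$ is compatible with the unchanged density, which it is, and that knowing $\beta$ legitimately confines the search to the $\beta$-slice.
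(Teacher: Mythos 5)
Your proposal is correct and follows exactly the route the paper intends: the paper states that the proof "is direct from Proposition \ref{prop:connected_martingality}," and your argument is precisely the fleshed-out version of that — observing that the principal tuple must lie in $\mathcal{A}$ because the density $e^{\beta t}\phi(X_t)\phi^{-1}(\xi)G_t^{-1}$ of a genuine probability measure is a true martingale, then using the proposition to rule out all interior tuples of the $\beta$-slice so that the single admissible endpoint must be the principal tuple. Your emphasis on the first step (admissibility of the principal pair, not mere candidacy) is exactly the point that makes the argument close.
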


\noindent  
When both of $(\beta,m_\beta)$ and $(\beta,M_\beta)$ are admissible tuples,
we cannot uniquely determine the objective measure $\mathbb{P}$ because there is
an infinite number of admissible pairs.

We investigate another way for recovery. 
We will see that there is only one way to recover the objective measure such that $X_{t}$ is non-attracted to the left (or right) boundary.

\begin{proposition}
For any fixed $\lambda$ with $\lambda\leq\overline{\beta},$ if it exists, there is a unique 
admissible pair $(\lambda,h)$ such that $X_{t}$ is non-attracted to the left boundary
under the corresponding transformed measure.
In this case,
$h'(\xi)=M_{\lambda}.$
\end{proposition}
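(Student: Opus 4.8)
The plan is to reduce the statement to Theorem \ref{thm:main_thm}, which already pins down exactly which candidate tuple along the $\lambda$-slice is non-attracted to the left boundary. The only genuinely new observation needed is that, for an admissible pair, non-attraction to the left boundary under the transformed measure is the same notion as non-attraction to the left boundary of the induced diffusion, so that Theorem \ref{thm:main_thm} applies verbatim. Since the hypothesis is stated with an ``if it exists'' clause, I only need to establish uniqueness and the value $h'(\xi)=M_\lambda$; no existence (i.e.\ martingality) argument is required.

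First I would recall that every admissible pair is in particular a candidate pair, so if $(\lambda,h)$ is admissible then $(\lambda,h'(\xi))\in\mathcal{C}$ and, by Proposition \ref{prop:slice}, $m_\lambda\leq h'(\xi)\leq M_\lambda$. Next, because $(\lambda,h)$ is admissible, the process $e^{\lambda t}\,h(X_t)\,h^{-1}(\xi)\,G_t^{-1}$ is a genuine martingale, so the transformed measure with respect to $(\lambda,h)$ is well defined, and by Proposition \ref{prop:dynamics_under_P} the law of $X_t$ under this transformed measure coincides with the law of the diffusion process induced by $(\lambda,h)$. In particular, $X_t$ is non-attracted to the left boundary under the transformed measure if and only if the induced diffusion is non-attracted to the left boundary. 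Theorem \ref{thm:main_thm} then states that among all tuples $(\lambda,z)$ with $m_\lambda\leq z\leq M_\lambda$, the induced diffusion is non-attracted to the left boundary precisely when $z=M_\lambda$, while for every $z$ with $m_\lambda\leq z<M_\lambda$ it is attracted to the left boundary. Hence any admissible pair $(\lambda,h)$ whose transformed measure leaves $X_t$ non-attracted to the left must satisfy $h'(\xi)=M_\lambda$, which is the claimed formula.

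Uniqueness is then immediate: a solution $h$ of $\mathcal{L}h=-\lambda h$ is determined by its initial data, and we have normalized $h(\xi)=1$, so once $h'(\xi)=M_\lambda$ is forced the function $h$ is uniquely determined. Thus there is at most one admissible pair $(\lambda,h)$ for which $X_t$ is non-attracted to the left boundary, and its tuple is $(\lambda,M_\lambda)$. The main point demanding care is the identification of the two notions of non-attraction through Proposition \ref{prop:dynamics_under_P}: it is the martingale property encoded in admissibility that guarantees the transformed dynamics of $X_t$ are exactly those of the induced diffusion, so that the boundary-attraction classification of Section \ref{sec:math_preli} transfers directly; everything else is bookkeeping. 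The right-boundary version follows by the symmetric argument with $M_\lambda$ replaced by $m_\lambda$.
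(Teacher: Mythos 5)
Your proof is correct and follows essentially the same route the paper intends: the paper gives no separate argument for this proposition, stating only that it is ``easily obtained'' from Section \ref{sec:graphical_understaing}, i.e.\ from Theorem \ref{thm:main_thm} exactly as you use it. Your explicit identification of non-attraction under the transformed measure with non-attraction of the induced diffusion via Proposition \ref{prop:dynamics_under_P}, together with uniqueness of the ODE solution from the initial data $(h(\xi),h'(\xi))=(1,M_\lambda)$, fills in the same bookkeeping the paper leaves implicit.
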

\noindent 
Similarly, 
if it exists, there is a unique 
admissible pair $(\lambda,h)$ such that $X_{t}$ is non-attracted to the {\em right} boundary
under the corresponding transformed measure.
In this case,
$h'(\xi)=m_{\lambda}.$
This proposition is easily obtained from Section \ref{sec:graphical_understaing} and gives the following theorem.

\begin{theorem}\label{thm:transient_recovery}
\textnormal{(Transient recovery)}
Suppose we know the value $\beta.$
If $X_{t}$ is non-attracted to the left (or right) boundary under the 
objective measure $\mathbb{P},$ then we can recover the objective measure $\mathbb{P}$ 
from the risk-neutral measure $\mathbb{Q}.$
\end{theorem}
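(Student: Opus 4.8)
The plan is to show that the two hypotheses --- knowledge of $\beta$ and non-attraction to the left boundary under $\mathbb{P}$ --- single out the principal pair $(\beta,\phi)$ uniquely, since recovering $\mathbb{P}$ is by definition the same as identifying this pair. First I would record that the principal pair is itself admissible. By Assumption \ref{assume:transition_indep} the density process is $\Sigma_t^{-1}=e^{\beta t}\,\phi(X_t)\,\phi^{-1}(\xi)\,G_t^{-1}$, and because $\mathbb{P}$ is a genuine probability measure equivalent to $\mathbb{Q}$, the process $\Sigma_t^{-1}$ is a $\mathbb{Q}$-martingale. Hence $e^{\beta t}\,\phi(X_t)\,G_t^{-1}$ is a $\mathbb{Q}$-martingale, so $(\beta,\phi'(\xi))\in\mathcal{A}$, and the transformed measure associated with $(\beta,\phi)$ is exactly $\mathbb{P}$. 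In particular $(\beta,\phi)\in\mathcal{C}$, so $\beta\leq\overline{\beta}$ and the preceding proposition is applicable with $\lambda=\beta$.

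Next I would feed in the hypothesis. Under $\mathbb{P}$ --- equivalently, under the transformed measure with respect to $(\beta,\phi)$ --- the process $X_t$ is non-attracted to the left boundary. Thus $(\beta,\phi)$ is an admissible pair whose induced diffusion is non-attracted to the left boundary. The preceding proposition states that for the fixed value $\lambda=\beta$ there is at most one such admissible pair, and that its slope satisfies $h'(\xi)=M_\beta$; the underlying reason is Theorem \ref{thm:main_thm}, which identifies $(\beta,M_\beta)$ as the only candidate tuple that escapes attraction to the left. By this uniqueness, $\phi$ must coincide with that pair, so $\phi'(\xi)=M_\beta$.

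It then remains to argue that this pins $\phi$ down from observable data. The quantity $M_\beta=\sup\{z\mid(\beta,z)\in\mathcal{C}\}$ is determined by the candidate set $\mathcal{C}$, which in turn is built solely from the known coefficients $\sigma$, $k=b-v\sigma$ and $r$ through the linear equation $\mathcal{L}h=-\beta h$; hence $M_\beta$ is computable ex ante. Given $\beta$, the normalization $\phi(\xi)=1$, and the slope $\phi'(\xi)=M_\beta$, the solution $\phi$ of the second-order equation $\mathcal{L}\phi=-\beta\phi$ is uniquely determined, and therefore so is the Radon-Nikodym derivative $\Sigma_t^{-1}=e^{\beta t}\,\phi(X_t)\,\phi^{-1}(\xi)\,G_t^{-1}$, which recovers $\mathbb{P}$. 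The right-boundary case is identical, with $m_\beta$ in place of $M_\beta$.

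The substantive content of the statement is carried entirely by the preceding proposition (and behind it by Theorem \ref{thm:main_thm} and Proposition \ref{prop:connected_martingality}); once that is granted, the theorem is a direct application together with the uniqueness of solutions to a linear second-order ODE. The one step I expect to require care is the first --- verifying that the principal pair is genuinely admissible and that the measure it induces is precisely $\mathbb{P}$ rather than a competing candidate measure --- since it is this identification that allows the uniqueness clause to select $\phi$. Beyond that I anticipate no obstacle.
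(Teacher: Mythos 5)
Your proposal is correct and follows essentially the same route as the paper: the paper derives the theorem directly from the preceding proposition (uniqueness of the admissible pair non-attracted to the left boundary, with $h'(\xi)=M_\lambda$), which in turn rests on Theorem \ref{thm:main_thm} and the graphical analysis of Section \ref{sec:graphical_understaing}. Your additional care in verifying that the principal pair is admissible and that its transformed measure is $\mathbb{P}$ simply makes explicit what the paper treats as immediate from Assumption \ref{assume:transition_indep} and the definition of the transformed measure.
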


\noindent If $X_{t}$ is attracted to both boundaries under the objective measure $\mathbb{P},$ 
then both $(\beta,m_\beta)$ and $(\beta,M_\beta)$ are admissible tuples.
Thus there is an infinite number of admissible tuples such that $X_{t}$ is attracted to both boundaries under the corresponding transformed measure. \newline

We now shift our attention to the choice of $\beta.$
When $X_{t}$ is non-attracted to the left (right) 
boundary, to recover the objective measure, we confront a problem of determining the value $\beta.$ 
How can we choose the value?
One way is to use the long-term yield of bonds, which is defined by 
\begin{equation*}
\lim_{t\rightarrow \infty}\left(-\frac{1}{t}\cdot\log \mathbb{E}^{\mathbb{Q}}\left[
G_t^{-1}\right]\right)\;.
\end{equation*}
See \cite{Martin} and \cite{Qin14b} as a reference.

\section{Applications}
\label{sec:applications}

\subsection{Interest rates}
\label{sec:interest_rates}
We investigate the recovery theorem when the state variable $X_t$ is an interest rate process $r_t$ and the numeraire is the money market account:
$$X_t=r_t\;,\;G_t=e^{\int_0^tr_s\,ds}\;.$$
Assume that $r_t$ follows
$dr_t=k(r_t)\,dt+\sigma(r_t)\,dW_t.$
Then 
$$\frac{1}{2}\sigma^2(r)h''(r)+k(r)h'(r)-rh(r)=-\lambda h\;.$$
is the corresponding second-order differential equation.
 
\begin{example}
We explore the CIR model:
$$dr_{t}=a(\theta-r_{t})\,dt+\sigma\sqrt{r_{t}}\,dW_{t}$$
with the Feller condition $2a\theta\geq\sigma^2.$ 
Consider
$$\frac{1}{2}\sigma^{2}rh''(r)+a(\theta-r)h'(r)-rh(r)=\lambda h$$
with $h(r_{0})=1.$
Set $k:=\frac{\sqrt{a^{2}+2\sigma^{2}}-a}{\sigma^{2}}.$
It can be shown that the maximum value $\overline{\beta}$ is equal to $ka\theta.$
For any fixed $\lambda$ with $\lambda\leq ka\theta,$ denote the functions corresponding to tuples $(\lambda,M_\lambda)$ and
$(\lambda,m_\lambda)$ by $h_\lambda(\cdot)$ and $g_\lambda(\cdot),$ respectively.
Then we have $h_\lambda(x)=\psi_\lambda(x)/\psi_\lambda(r_0)$ where
$$\psi_\lambda(x):=e^{-kx}\,K\left(\frac{\lambda-ka\theta}{\sqrt{a^2+2\sigma^2}}\,,\,\frac{2a\theta}{\sigma^2}\,,\,\frac{2x\sqrt{a^2+2\sigma^2}}{\sigma^2}\right)\;.$$
Here, $K(\,\cdot,\,\cdot\,,\cdot\,)$ is the Kummer confluent hypergeometric function.
Refer to \cite{Qin14a}  for more details about $g_\lambda.$
It can be shown that $(\lambda,M_\lambda)$ is an admissible tuple, but $(\lambda,m_\lambda)$ is not.
Thus, the admissible set $\mathcal{A}$ is described by
$$\mathcal{A}=\{\,(\lambda,h_\lambda'(r_0))\in\mathbb{R}^2\,|\,\lambda\leq ka\theta\,\}\;.$$
See \cite{Qin14a}  for more details.

Interest rates are usually recurrent in the real world; therefore, 
we assume that $r_{t}$ is recurrent under the objective measure.
$(ka\theta, e^{-k(r-r_{0})})$
is the only admissible pair that
induces the recurrent $X_t$ under the corresponding transformed measure and under which
the $X_t$ is expressed by
$$dr_{t}=\sqrt{a^{2}+2\sigma^{2}}\left(\frac{a\theta}{\sqrt{a^{2}+2\sigma^{2}}}-r_{t}\right)\,dt+\sigma\sqrt{r_{t}}\,dB_{t}\;.$$
\end{example}

\subsection{Stock prices}
\label{sec:stock_prices}
We investigate the recovery theorem when the state variable $X_t$ is a stock price $S_t$ and 
the dividends of the stock are paid out continuously with
rate $\delta(S_t)\,dt.$
Here $\delta(\cdot)$ is a deterministic function and the case of $\delta(\cdot)=0$ is not excluded. 
Let the numeraire $G_t$ be the wealth process defined by 
$$G_t=S_te^{\int_0^t\delta(S_u)\,du}\;.$$
Assume that the dynamics of $S_t$ is given by
$$dS_t=(r(S_t)-\delta(S_t)+\sigma^2(S_t))S_t\,dt+\sigma(S_t)S_t\,dW_t\;.$$
Then it follows that
$$\frac{dG_t}{G_t}=(r(S_t)+\sigma^2(S_t))\,dt+\sigma(S_t)\,dW_t\;.$$
The corresponding second-order differential equation is
$$\frac{1}{2}\sigma^2(s)s^2h''(s)+(r(s)-\delta(s))sh'(s)-r(s)h(s)=-\lambda h(s)\;.$$
It is noteworthy that that if 
there is a money market account with a constant interest rate $r$ in the market, then $r(S_t)=r.$

\begin{example}
Assume that the state variable is a stock price $S_t$ and
the dividends of the stock are paid out continuously with
rate $\delta\,dt.$
Suppose $S_t$ follows a geometric Brownian motion
$$dS_t=(r-\delta+\sigma^2)S_t\,dt+\sigma S_t\,dW_t\;,\;S_0=1$$
and the numeraire is $G_t=S_te^{\delta t}.$ 
Consider
$$\frac{1}{2}\sigma^2s^2h''(s)+(r-\delta)sh'(s)-rh(s)=-\lambda h(s)\;.$$
We want to find the candidate set and the admissible set.
For $\lambda<\frac{1}{2}\left(\frac{\sigma}{2}-\frac{r-\delta}{\sigma}\right)^2+r,$
these solutions are given by
$h(s)=c\,s^{l_{1}}+(1-c)\,s^{l_{2}}$
for $0\leq c\leq 1$
where
\begin{equation*}
\begin{aligned}
l_{1}=\frac{1}{2}-\frac{r-\delta}{\sigma^2}-\sqrt{\left(\frac{1}{2}-\frac{r-\delta}{\sigma^2}\right)^{2}+\frac{2(r-\lambda)}{\sigma^2}}\,,\\
l_{2}=\frac{1}{2}-\frac{r-\delta}{\sigma^2}+\sqrt{\left(\frac{1}{2}-\frac{r-\delta}{\sigma^2}\right)^{2}+\frac{2(r-\lambda)}{\sigma^2}}\;,
\end{aligned}
\end{equation*}	
thus we have
$m_{\lambda}=l_{1},\,M_{\lambda}=l_{2}.$
For $\lambda=\frac{1}{2}\left(\frac{\sigma}{2}-\frac{r-\delta}{\sigma}\right)^2+r,$
the only positive solution $h$ with $h(1)=1$ is
$h(x)=x^{\frac{1}{2}-\frac{r-\delta}{\sigma^2}}$
and $m_{\lambda}=M_{\lambda}=\frac{1}{2}-\frac{r-\delta}{\sigma^{2}}.$
For $\lambda>\frac{1}{2}\left(\frac{\sigma}{2}-\frac{r-\delta}{\sigma}\right)^2+r,$
there are no positive solutions. 
Thus, the candidate set is obtained. It can be easily shown that the admissible set is equal to the candidate set. 

If $S_{t}$ is recurrent under the objective 
measure, then
the transformed measure with respect to tuple 
$\left(\frac{1}{2}\left(\frac{\sigma}{2}-\frac{r-\delta}{\sigma}\right)^2+r,\frac{1}{2}-\frac{r-\delta}
{\sigma^{2}}\right)$ is the objective measure.
We used Theorem \ref{thm:recurrent_recovery}.
Under this measure, $S_{t}$ follows
$$dS_{t}=\frac{1}{2}\sigma^2 S_t\,dt+\sigma S_t\,dB_{t}\;.$$

We now assume that
$S_{t}$ is non-attracted to $0$ and the 
value $\beta\; \left(\leq \frac{1}{2}\left(\frac{\sigma}{2}-\frac{r-\delta}{\sigma}\right)^2+r\right)$ is known.
Then, by Theorem \ref{thm:transient_recovery}, 
the transformed measure with respect to tuple 
$$(\beta,M_\beta)=\left(\beta\,,\,\frac{1}{2}-\frac{r-\delta}{\sigma^2}+\sqrt{\left(\frac{1}{2}-\frac{r-\delta}{\sigma^2}\right)^{2}+\frac{2(r-\beta)}{\sigma^2}}\right)$$ is the objective measure,
under which $S_{t}$ follows
$$dS_{t}=\left(\frac{\sigma^2}{2}
+\sqrt{\left(\frac{\sigma^2}{2}-(r-\delta)\right)^2+2\sigma^2(r-\beta)}\right)S_t\,dt+\sigma S_t\,dB_{t}\,.$$
As a particular case, if $\beta$ is equal to the long-term yield of bonds
$$\lim_{t\rightarrow \infty}\left(-\frac{1}{t}\cdot\log \mathbb{E}^{\mathbb{Q}}\left[
G_t^{-1}\right]\right)=r\;,$$
then $S_{t}$ follows
$$dS_{t}=\left(\frac{\sigma^2}{2}+\left|r-\delta-\frac{\sigma^2}{2}\right|\right)S_t\,dt+\sigma S_t\,dB_{t}$$
under the objective measure.
The numeraire $G_t$ satisfies $dG_t=\mu G_t\,dt+\sigma G_t\,dB_{t}$
with $$\mu:=\frac{\sigma^2}{2}+\delta+\left|r-\delta-\frac{\sigma^2}{2}\right|\;.$$
Thus, the Sharpe ratio of $G_t$ is
\begin{equation*}
\frac{\mu-r}{\sigma}=\left\{
\begin{aligned}
&\sigma+\frac{2(\delta-r)}{\sigma}&&\quad\textnormal{ if } r<\delta+\frac{\sigma^2}{2}\\
&\quad\quad 0 &&\quad\textnormal{ if } r\geq\delta+\frac{\sigma^2}{2}\;.
\end{aligned}\right.
\end{equation*}	
\end{example}

\section{Conclusion}
\label{sec:conclusion}
This article extended the Ross model to a continuous-time setting model. 
In a continuous-time setting, the risk-neutral measure contains some information about the objective measure. 
Unfortunately, the model fails to recover an objective measure from a risk-neutral measure. 
We discussed several conditions under which the 
recovery of the objective measure from the risk-neutral measure is possible in a continuous-time 
model. When the state variable $X_t$ is recurrent under the objective measure, recovery is possible.

We also determined the type of information that is sufficient for recovery when $X_t$ is transient.
It was shown that when $X_t$ is transient,
recovery is possible if the value $\beta$ is known and only one of $(\beta,m_\beta)$ and $(\beta,M_\beta)$ is an admissible tuple.
We can also recover the objective measure if the value $\beta$ is known and $X_t$ is non-attracted to the left (or right) boundary.

The following extensions for future research are suggested. First, it would be interesting to find sufficient conditions under 
which recovery is possible when $X_t$ is attracted to both boundaries. We could 
not offer such conditions in this article.
Second, it would be valuable to find financially and economically reasonable ways to determine $\beta.$ 
Finally, much work remains 
to be conducted on the implementation and empirical testing of recovery theory in future
research.

\appendices

\section{Proof of Proposition \ref{prop:slice}}
\label{app:pf_slice}

\begin{proposition} \label{prop:ODE}
A solution $h$ of $\mathcal{L}h=-\lambda h$
can be expressed by
$h=uq,$ where $q(x)=e^{-\int_{0}^{x}\frac{k(y)}{\sigma^{2}(y)}\,dy}$ and $u$ is a solution of
$$u''(x)+\left(-\frac{d}{dx}\left(\frac{k(x)}{\sigma^{2}(x)}\right)-\frac{k^{2}(x)}{\sigma^{4}(x)}
+\frac{2(-r(x)+\lambda)}{\sigma^{2}(x)}\right)u(x)=0\;.$$
\end{proposition}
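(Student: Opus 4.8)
The plan is to recognize the claimed identity as the classical \emph{Liouville reduction to normal form}, which removes the first-order term from a linear second-order ODE. First I would rewrite the eigenvalue equation $\mathcal{L}h=-\lambda h$, namely $\frac{1}{2}\sigma^2 h''+kh'-rh=-\lambda h$, in standard form by multiplying through by $2/\sigma^2$, obtaining
\begin{equation*}
h''+\frac{2k}{\sigma^2}\,h'+\frac{2(-r+\lambda)}{\sigma^2}\,h=0\;.
\end{equation*}
Writing $P:=2k/\sigma^2$ and $Q:=2(-r+\lambda)/\sigma^2$, this is $h''+Ph'+Qh=0$, and the goal is to substitute $h=uq$ so that the $u'$ term cancels.

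Next I would differentiate $h=uq$ twice, $h'=u'q+uq'$ and $h''=u''q+2u'q'+uq''$, substitute into the standard-form equation, and group by derivatives of $u$ to obtain
\begin{equation*}
q\,u''+(2q'+Pq)\,u'+(q''+Pq'+Qq)\,u=0\;.
\end{equation*}
The coefficient of $u'$ vanishes precisely when $2q'+Pq=0$, i.e. $q'/q=-P/2=-k/\sigma^2$, whose solution is $q(x)=\exp\bigl(-\int_0^x \frac{k(y)}{\sigma^2(y)}\,dy\bigr)$. This is exactly the $q$ in the statement, so with this choice the equation reduces, after dividing by $q$, to $u''+\bigl(q''/q+Pq'/q+Q\bigr)u=0$.

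Finally I would evaluate the surviving coefficient. Setting $g:=q'/q=-P/2$, one has $q''/q=g'+g^2=-\frac{1}{2}P'+\frac{1}{4}P^2$ and $Pq'/q=Pg=-\frac{1}{2}P^2$, so the coefficient of $u$ is
\begin{equation*}
\frac{q''}{q}+P\frac{q'}{q}+Q=-\frac{1}{2}P'-\frac{1}{4}P^2+Q\;.
\end{equation*}
Substituting $P=2k/\sigma^2$ and $Q=2(-r+\lambda)/\sigma^2$, and using $\frac{1}{2}P'=\frac{d}{dx}(k/\sigma^2)$ together with $\frac{1}{4}P^2=k^2/\sigma^4$, yields precisely $-\frac{d}{dx}(k/\sigma^2)-k^2/\sigma^4+2(-r+\lambda)/\sigma^2$, the claimed coefficient.

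There is no genuine obstacle here beyond the bookkeeping of the differentiation and the cancellation of the $\frac{1}{4}P^2$ and $\frac{1}{2}P^2$ terms; the only point worth flagging is that the continuous differentiability of $b,\sigma,v,r$ and the positivity of $\sigma$ (Assumptions \ref{assume:X}--\ref{assume:interest_rate}) make $k=b-v\sigma$ continuously differentiable and guarantee that $q$ is well defined, strictly positive, and $C^2$ on $I$. Consequently $h\mapsto u=h/q$ is a bijection between solutions of $\mathcal{L}h=-\lambda h$ and solutions of the displayed $u$-equation, so every solution $h$ admits the asserted representation $h=uq$.
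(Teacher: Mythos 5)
Your proof is correct and is exactly the "direct calculation" the paper alludes to (the paper gives no details, deferring to Walden's appendix): the standard Liouville substitution $h=uq$ with $q$ chosen to kill the first-order term, followed by bookkeeping of the coefficient of $u$. Nothing further is needed.
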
 \noindent
This can be shown by direct calculation.
For more details, refer to \cite{Walden13}, page 36. \newline

We now prove Proposition \ref{prop:slice}.

\begin{proof}
For convenience, we may assume that $\xi=0$, the left boundary is $-\infty$ and the right boundary is  $\infty.$
Let $h$ be a positive solution of $\mathcal{L}h=-\lambda h$ with $h(0)=1.$
Another solution, which is independent of $h$, is 
$$h(x)\int_{0}^{x}\frac{1}{h^{2}(y)}\,e^{-\int_{0}^{y}\frac{2k(z)}{\sigma^{2}(z)}\,dz}\,dy\;.$$
(It can be obtained by direct calculation).
Let $S$ be the scale measure with respect to the pair $(\lambda,h),$ then 
$$h(x)\int_{0}^{x}\frac{1}{h^{2}(y)}\,e^{-\int_{0}^{y}\frac{2k(z)}{\sigma^{2}(z)}\,dz}\,dy=h(x)\cdot S([0,x))\;.$$
The general solutions of $\mathcal{L}h=-\lambda h$ are expressed by
$h(x)\left(c_{1}+c_{2}\cdot S([0,x))\right).$
Assume that at least one of
$S((-\infty,0])$ and $S([0,\infty))$ is finite. (Otherwise, $h$ is the 
unique positive solution of $\mathcal{L}h=-\lambda h$, so $m_{\beta}=z=M_{\beta},$ in which 
case we have nothing to prove).
We may assume $S((-\infty,0])<\infty.$
Set $B:=S((-\infty,0]).$
The general (normalized to $h_{c}(0)=1$) solution is expressed by
$$h_{c}(x):=h(x)\left((1-c)+\frac{c}{B}\cdot S((-\infty,x])\right)\;,$$
which is a positive function for and only for 
\begin{equation*}
\left\{
\begin{aligned}
0\leq &c\leq 1 &&\text{ if } S([0,\infty))=\infty   \\
-\frac{B}{D}\leq &c\leq 1 &&\text{ if } D:=S([0,\infty))<\infty
\end{aligned}\right.
\end{equation*}
Using $h_{c}'(0)=h'(0)+\frac{c}{B},$ we have that
\begin{equation*}
\left\{
\begin{aligned}
&M_{\lambda}=h'(0)+\frac{1}{B},\,&&m_{\lambda}=h'(0) &&\text{ if } S([0,\infty))=\infty \\
&M_{\lambda}=h'(0)+\frac{1}{B},\,&&m_{\lambda}=h'(0)-\frac{1}{D} &&\text{ if }D=S([0,\infty))<\infty
\end{aligned}\right.
\end{equation*}
Furthermore, $h_{c}'(0)$ can be any value in $[m_{\lambda}, M_{\lambda}].$
Hence, for any $z$ with
$m_{\lambda}\leq z\leq M_{\lambda},$ the tuple $(\lambda,z)$ is in $\mathcal{A}.$ 
This completes the proof.
\end{proof}

\section{Proof of Theorem \ref{thm:main_thm}}
\label{app:pf_main_thm}

\begin{proof}
For convenience, 
we may assume that $\xi=0$, the left boundary is $-\infty$ and the right boundary is  $\infty.$
Suppose that the diffusion process induced by tuple $(\lambda,M_\lambda)$ is attracted to the left boundary.
Denote the scale measure with respect to tuple $(\lambda,M_\lambda)$ by $S.$
We have $B:=S((-\infty,0])<\infty.$
By the proof of Proposition \ref{prop:slice}, we know that $(\lambda,h'(0)+\frac{1}{B})
=(\lambda,M_{\lambda}+\frac{1}{B})$ is in $\mathcal{C}.$ This is a contradiction.

We now show that for $z$ with $m_{\lambda}\leq z<M_{\lambda},$
the diffusion process induced by tuple $(\lambda,z)$
is attracted to the left boundary.
Let $h$ and $g$ be the functions corresponding to tuple $(\lambda,M_\lambda)$ and $(\lambda,z),$ respectively.
Recall $q(x)=e^{-\int_{0}^{x}\frac{k(y)}{\sigma^{2}(y)}\,dy}$ in Proposition \ref{prop:ODE}.
Write $h=uq$ and $g=vq.$
It can be easily checked that $u(0)=v(0)=1$ and $u'(0)>v'(0).$
Set $\Gamma:=\frac{u'}{u}-\frac{v'}{v}.$
By direct calculation, we have
$\Gamma'=-\Gamma^{2}-\frac{2v'}{v}\Gamma.$
Because $\Gamma(0)>0$ and $\Gamma=0$ is an equilibrium point, we know that $\Gamma(x)
\geq 0$ for all
$x.$ 
By differentiating $\Gamma(x)\,e^{\int_{0}^{x}\left(\Gamma(y)+\frac{2v'(y)}{v(y)}\right)\,dy},$
we have
$$\left(\Gamma'(x)+\Gamma^{2}(x)+\frac{2v'(x)}{v(x)}\Gamma(x)\right)e^{\int_{0}^{x}\left(\Gamma(y)
+\frac{2v'(y)}{v(y)}\right)\,dy}=0\;.$$
Thus, $\Gamma(x)\,e^{\int_{0}^{x}\left(\Gamma(y)+\frac{2v'(y)}{v(y)}\right)\,dy}=\Gamma(0),$
which yields
$$v^{-2}(x)=e^{-\int_{0}^{x}\frac{2v'}{v}\,dy}=\frac{\Gamma(x)}{\Gamma(0)}\,e^{\int_{0}^{x}
\Gamma(y)\,dy}\;.$$
We obtain that 
\begin{equation*}
\begin{aligned}
\int_{-\infty}^{0}\frac{1}{g^{2}(y)}\,e^{-\int_{0}^{y}\frac{2k(z)}{\sigma^{2}(z)}\,dz}\,dy\
&=\int_{-\infty}^{0}v^{-2}(y)\,dy \\
&=\frac{1}{\Gamma(0)}\int_{-\infty}^{0}\Gamma(y)\,e^{\int_{0}^{y}\Gamma(z)\,dz}\,dy\\
&=\frac{1}{\Gamma(0)}\left(1-e^{-\int_{-\infty}^{0}\Gamma(y)\,dy}\right) \leq \frac{1}{\Gamma(0)}
<\infty\;.
\end{aligned}
\end{equation*}
Therefore, the diffusion process induced by $(\lambda,z)$ is attracted to $-\infty.$
\end{proof}

\section{Proof of Proposition \ref{prop:monotone}}
\label{app:pf_monotone}

\begin{proof}
Let $\delta<\lambda.$
Let $g$ be a positive function of $\mathcal{L}g=-\delta g$ with $g(0)=1$ and $g'(0)=M_{\delta}.$
We show that there does not exist a positive function $h$ of $\mathcal{L}h=-\lambda h$ with 
$h(0)=1$ and $h'(0)=M_{\delta}.$ This implies that $M_{\delta}>M_{\lambda}.$
Suppose
$h$ is a positive solution of $\mathcal{L}h=-\lambda h$ with $h(0)=1$ and $h'(0)=M_{\delta}.$
Recall $q(x)=e^{-\int_{0}^{x}\frac{k(y)}{\sigma^{2}(y)}\,dy}$ in Proposition \ref{prop:ODE}.
Write $h=uq$ and $g=vq.$
Define $\Gamma=\frac{u'}{u}-\frac{v'}{v}.$
Then
$$\Gamma'=-\Gamma^{2}-\frac{2v'}{v}\Gamma-\frac{2(\lambda-\delta)}{\sigma^{2}}\;.$$
From $\Gamma(0)=0,$ we have that $\Gamma(x)>0$ for $x<0$
because if $\Gamma$ ever gets close to $0,$ then term $-\frac{2(\beta-\lambda)}{\sigma^{2}}$ dominates the right hand side of the equation.
Choose $x_{0}$ with $x_{0}<0.$
For $x<x_{0},$ we have
$$-\frac{2v'(x)}{v(x)}=\frac{\Gamma'(x)}{\Gamma(x)}+\Gamma(x)+\frac{2(\lambda-\delta)}
{\sigma^{2}(x)}\cdot\frac{1}{\Gamma(x)}\;.$$
Integrating from $x_{0}$ to $x,$
$$-2\ln\frac{v(\,x\,)}{v(x_{0})}=\ln \frac{\Gamma(\,x\,)}{\Gamma(x_{0})}+\int_{x_{0}}^{x}
\Gamma(y)\,dy+\int_{x_{0}}^{x}\frac{2(\lambda-\delta)}{\sigma^{2}(y)}\cdot\frac{1}{\Gamma(y)}
\,dy$$
which leads to
$$\frac{v^{2}(x_{0})}{v^{2}(\,x\,)}\leq \frac{\Gamma(\,x\,)}{\Gamma(x_{0})}\,e^{\int_{x_{0}}^{x}
\Gamma(y)\,dy}$$
for $x<x_{0}.$
Thus,
\begin{equation*}
\begin{aligned}
\int_{-\infty}^{x_{0}}\frac{1}{g^{2}(y)}\,e^{-\int_{0}^{y}\frac{2k(z)}{\sigma^{2}(z)}\,dz}\,dy\
&=\int_{-\infty}^{x_{0}}\frac{1}{v^{2}(y)}\,dy\\
&\leq (\text{constant})\cdot\int_{-\infty}^{x_{0}}\Gamma(y)\,e^{\int_{x_{0}}^{y}\Gamma(w)dw}\,dy 
\\
&=(\text{constant})\cdot\left(1-e^{-\int_{-\infty}^{x_{0}}\Gamma(w)dw}\right)\\
&\leq (\text{constant})\\
&< \infty \;.
\end{aligned}
\end{equation*}
This implies that the diffusion process induced by the tuple $(\delta,M_{\delta})$ is attracted to the left boundary, 
which is a contradiction.
\end{proof}

\section{Proof of Proposition \ref{prop:connected_martingality}}
\label{app:pf_prop:connected_martingality}

Suppose that two tuples $(\lambda,m_\lambda)$ and $(\lambda,M_\lambda)$ are in $\mathcal{A}.$
Let $h_m$ and $h_M$ be the functions corresponding to $(\lambda,m_\lambda)$ and $(\lambda,M_\lambda),$ respectively. 
Then $e^{\lambda t}\,h_m(X_{t})\,G_t^{-1}$ and $e^{\lambda t}\,h_M(X_{t})\,G_t^{-1}$
are martingales.
For $z$ with $m_{\lambda}\leq z\leq M_{\lambda},$
let $h$ be the function corresponding to $(\lambda,z).$
Because $h$ can be expressed as a linear combination of $h_m$ and $h_M,$
$e^{\lambda t}\,h(X_{t})\,G_t^{-1}$ is also a martingale. Thus, $(\lambda,z)$ is in $\mathcal{A}.$

We now show that 
$e^{\lambda t}\,h(X_{t})\,G_t^{-1}$ is not a martingale if at least one of 
$(\lambda,m_\lambda)$ and $(\lambda,M_\lambda)$ is not in $\mathcal{A}.$
Write $h=c\,h_m+(1-c)h_M$ for some constant $0\leq c\leq 1.$
Because $e^{\lambda t}\,h_m(X_{t})\,G_t^{-1}$ and
$e^{\lambda t}\,h_M(X_{t})\,G_t^{-1}$ 
are positive local martingales, they are supermartingales.
We have
\begin{equation*}
\begin{aligned}
h(\xi)&=c\,h_m(\xi)+(1-c)h_M(\xi) \\
&\geq 
c\,\mathbb{E}[e^{\lambda t}\,h_m(X_{t})\,G_t^{-1}]+(1-c)\,\mathbb{E}[e^{\lambda t}\,h_M(X_{t})\,G_t^{-1}]\\
&=\mathbb{E}[e^{\lambda t}\,h(X_{t})\,G_t^{-1}]\;.
\end{aligned}
\end{equation*}
If at least one of 
$(\lambda,m_\lambda)$ and $(\lambda,M_\lambda)$ is not in $\mathcal{A},$ the above inequality is strict.
This completes the proof.

\section{Proof of Proposition \ref{prop:monotone_martingality}}
\label{app:pf_monotone_martingality}

\begin{theorem}\label{thm:criterion_martingality}
Let $(\beta,\phi)$ be a solution pair of
$\mathcal{L}\phi=-\beta\phi$ with positive function $\phi.$
Then
$$e^{\beta t}\,\phi(X_{t})\,\phi^{-1}(\xi)\,G_t^{-1}$$
is a martingale if and only if the diffusion induced by $(\beta,\phi)$
$$dX_t=(k+\sigma^2\phi'\phi^{-1})(X_t)\,dt+\sigma(X_t)\,dB_t$$
does not explode. 
\end{theorem}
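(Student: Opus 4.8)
The plan is to reduce the martingale property to a single survival‑probability identity. Write $N_t:=e^{\beta t}\,\phi(X_t)\,\phi^{-1}(\xi)\,G_t^{-1}$. By the computation in Section \ref{sec:transformed} (with $h=\phi$, $\lambda=\beta$), $N_t$ is a \emph{positive} local martingale under $\mathbb{Q}$ with $N_0=1$, indeed $dN_t=(\sigma\phi'\phi^{-1}-v)(X_t)\,N_t\,dW_t$. A nonnegative local martingale is a supermartingale, so it is a true martingale if and only if its expectation stays constant, i.e.\ $\mathbb{E}^{\mathbb{Q}}[N_t]=1$ for every $t$. Thus the whole theorem reduces to computing $\mathbb{E}^{\mathbb{Q}}[N_t]$. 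First I would invoke the localizing sequence $\tau_n=\inf\{t>0\mid X_t\notin(c_n,d_n)\}$ and $\tau=\lim_n\tau_n$ from Section \ref{sec:recurrence_and_transience}. On $[\![0,\tau_n]\!]$ the process $X$ stays in the compact set $[c_n,d_n]\subset I$, where $(\sigma\phi'\phi^{-1}-v)$ is continuous and hence bounded; Novikov's condition then shows that the stopped process $N^{\tau_n}$ is a genuine (uniformly integrable on each $[0,T]$) martingale, so $\mathbb{E}^{\mathbb{Q}}[N_{t\wedge\tau_n}]=1$.

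Next I would use these stopped densities to build the transformed law. For each $n$ the recipe $\left.\frac{d\mathbb{Q}^{\phi}}{d\mathbb{Q}}\right|_{\mathcal{F}_{t\wedge\tau_n}}=N_{t\wedge\tau_n}$ defines a probability measure, and optional stopping makes the family consistent in $n$ (for $A\in\mathcal{F}_{t\wedge\tau_m}$ with $m\le n$ the two densities agree), so they glue into a single measure $\mathbb{Q}^{\phi}$ on $\bigvee_n\mathcal{F}_{\tau_n}$. By the Girsanov theorem \eqref{eqn:Girsanov}–\eqref{eqn:rho}, under $\mathbb{Q}^{\phi}$ the process $X$, run up to each $\tau_n$, solves the induced equation $dX_t=(k+\sigma^2\phi'\phi^{-1})(X_t)\,dt+\sigma(X_t)\,dB_t$ of Proposition \ref{prop:dynamics_under_P}; hence $\mathbb{Q}^{\phi}$ is exactly the law of the diffusion induced by $(\beta,\phi)$, and $\zeta:=\lim_n\tau_n$ is its explosion time. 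The crucial identity is then, for $A=\{\tau_n>t\}\in\mathcal{F}_{t\wedge\tau_n}$,
$$\mathbb{E}^{\mathbb{Q}}\!\left[N_t\,\mathbf{1}_{\{\tau_n>t\}}\right]=\mathbb{E}^{\mathbb{Q}}\!\left[N_{t\wedge\tau_n}\,\mathbf{1}_{\{\tau_n>t\}}\right]=\mathbb{Q}^{\phi}(\tau_n>t)\;.$$

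To finish I would pass to the limit. Because the endpoints are unattainable for $X$ under $\mathbb{Q}$ (Assumption \ref{assume:X}), we have $\tau_n\uparrow\infty$ $\mathbb{Q}$‑a.s., so $N_t\,\mathbf{1}_{\{\tau_n>t\}}\uparrow N_t$ and monotone convergence gives
$$\mathbb{E}^{\mathbb{Q}}[N_t]=\lim_{n\to\infty}\mathbb{Q}^{\phi}(\tau_n>t)=\mathbb{Q}^{\phi}(\zeta>t)\;.$$
Combining with the first paragraph, $N$ is a martingale $\iff\mathbb{E}^{\mathbb{Q}}[N_t]=1$ for all $t\iff\mathbb{Q}^{\phi}(\zeta>t)=1$ for all $t\iff\mathbb{Q}^{\phi}(\zeta=\infty)=1$, which is precisely non‑explosion of the induced diffusion; both implications of the stated equivalence fall out of this chain. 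I expect the main obstacle to be the second step: rigorously constructing the projective‑limit measure $\mathbb{Q}^{\phi}$ and certifying that under it $X$ is genuinely the induced (possibly exploding) diffusion with lifetime $\zeta=\lim_n\tau_n$, together with the bookkeeping that identifies $\mathbb{E}^{\mathbb{Q}}[N_t]$ with the survival probability $\mathbb{Q}^{\phi}(\zeta>t)$ rather than some smaller quantity. Everything else is routine localization and an application of Girsanov.
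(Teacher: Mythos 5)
Your proposal is correct, but it takes a genuinely different route from the paper. The paper's proof is a two-step reduction: it first uses the martingale $H_t=\exp\bigl(-\tfrac12\int_0^t v^2(X_s)\,ds-\int_0^t v(X_s)\,dW_s\bigr)$ from Assumption \ref{assume:interest_rate} to pass to an auxiliary measure $\mathbb{L}$ under which $dX_t=k(X_t)\,dt+\sigma(X_t)\,dZ_t$; it then shows (by comparing expectations of the two supermartingales) that the martingale property of $e^{\beta t}\phi(X_t)G_t^{-1}$ under $\mathbb{Q}$ is equivalent to that of $e^{\beta t-\int_0^t r(X_s)\,ds}\,\phi(X_t)$ under $\mathbb{L}$, and finally cites Pinsky (pp.~212, 215) for the equivalence of the latter with non-explosion of the $\phi$-transformed diffusion. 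You instead prove the Pinsky-type criterion directly: localize along $\tau_n$, use Novikov on compacts to get $\mathbb{E}^{\mathbb{Q}}[N_{t\wedge\tau_n}]=1$, identify $\mathbb{E}^{\mathbb{Q}}[N_t\mathbf{1}_{\{\tau_n>t\}}]$ with the survival probability of the induced diffusion, and let $n\to\infty$. What your approach buys is self-containedness (no external citation, and no need for the intermediate measure $\mathbb{L}$, since the $v$ term is absorbed directly into the Girsanov density $\sigma\phi'\phi^{-1}-v$); what the paper's approach buys is brevity and the avoidance of the one delicate point you correctly flag, namely the construction of the projective-limit measure $\mathbb{Q}^{\phi}$ on $\bigvee_n\mathcal{F}_{\tau_n}$. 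That extension is standard on canonical path space but can fail on an abstract filtered space; you can sidestep it entirely by never gluing the measures --- for each fixed $t$ and $n$ work only with the measure on $\mathcal{F}_{t\wedge\tau_n}$ with density $N_{t\wedge\tau_n}$, note that under it the stopped process has the law of the induced diffusion killed at $\tau_n$ (so $\mathbb{E}^{\mathbb{Q}}[N_t\mathbf{1}_{\{\tau_n>t\}}]$ equals the probability that the induced diffusion, realized on its own canonical space, has not exited $(c_n,d_n)$ by time $t$), and then take $n\to\infty$ exactly as you do. With that adjustment the argument is complete.
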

\begin{proof}
Set $$H_t:=\exp{\left(-\frac{1}{2}\int_{0}^t v^2(X_s)\,ds-\int_0^t v(X_s)\,dW_s\right)}\;.$$
Define a new measure $\mathbb{L}$ by setting the Radon-Nikodym derivative
$d\mathbb{L}= H_t\,d\mathbb{Q}.$
By the Girsanov theorem, we know that a process $Z_t$ defined by
$$dZ_t=v(X_t)\,dt+dW_t$$
is a Brownian motion under $\mathbb{L}.$
It follows that
\begin{equation*}
\begin{aligned}
dX_t
&=(b-v\sigma)(X_t)\,dt+\sigma(X_t)\,dZ_t\\
&=k(X_t)\,dt+\sigma(X_t)\,dZ_t\;.
\end{aligned}
\end{equation*}
We used that that $k(x)=(b-v\sigma)(x).$
We show that 
$e^{\beta t}\,\phi(X_{t})\,G_t^{-1}$ is a martingale under $\mathbb{Q}$ if and only if
$e^{\beta t-\int_0^tr(X_s)\,ds}\,\phi(X_{t})$
is a martingale under $\mathbb{L}.$
It is because
\begin{equation*}
\begin{aligned}
\mathbb{E}^{\mathbb{L}}[e^{\beta t-\int_0^tr(X_s)\,ds}\,\phi(X_{t})]
&=\mathbb{E}^{\mathbb{Q}}[e^{\beta t-\int_0^t r(X_s)\,ds}\,\phi(X_{t})\,H_t]\\
&=\mathbb{E}^{\mathbb{Q}}[e^{\beta t}\,\phi(X_{t})\,G_t^{-1}]\;.
\end{aligned}
\end{equation*}
Since both are supermatingales, this computation gives the desired result.

From the contents of \cite{Pinsky} on page 212 and 215, we know that
$e^{\beta t-\int_0^tr(X_s)\,ds}\,\phi(X_{t})$
is a martingale under $\mathbb{L}$ if and only if
$$dX_t=(k+\sigma^2\phi'\phi^{-1})(X_t)\,dt+\sigma(X_t)\,dZ_t$$
does not explode. This completes the proof.
\end{proof}

\begin{lemma}
Assume $\delta<\lambda\leq\overline{\beta}.$
Let $g$ and $h$ be the functions corresponding to tuple $(\delta,M_\delta)$ and $(\lambda,M_\lambda),$ respectively. Then we have $g'g^{-1}>h'h^{-1}.$ 
\end{lemma}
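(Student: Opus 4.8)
The plan is to prove the equivalent assertion that the Wronskian-type quantity $W:=u'v-uv'$ is strictly negative on all of $I$. Following Proposition \ref{prop:ODE} I write $h=uq$ and $g=vq$ with $q(x)=e^{-\int_0^x k/\sigma^2}$, and, exactly as in the appendix proofs, normalize so that $\xi=0$, the left boundary is $-\infty$ and the right boundary is $\infty$; then $u(0)=v(0)=1$. Since $g'g^{-1}-h'h^{-1}=v'v^{-1}-u'u^{-1}=-W/(uv)$ and $u,v>0$, the claim $g'g^{-1}>h'h^{-1}$ is exactly $W<0$. From the two equations of Proposition \ref{prop:ODE} for $u$ (eigenvalue $\lambda$) and $v$ (eigenvalue $\delta$), the same routine computation that underlies Proposition \ref{prop:monotone} gives $W'=-\tfrac{2(\lambda-\delta)}{\sigma^2}\,uv$, so $W$ is strictly decreasing; moreover $W(0)=u'(0)-v'(0)=M_\lambda-M_\delta$, which is strictly negative by Proposition \ref{prop:monotone}.

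On the right half $x\ge 0$ there is nothing to do: integrating $W'$ from $0$ gives $W(x)=W(0)-\int_0^x \tfrac{2(\lambda-\delta)}{\sigma^2}uv\,dy$, and both terms are nonpositive (the first strictly), so $W(x)<0$. The genuine difficulty is the left half $x<0$, where $W$ increases as $x\to-\infty$ and the local sign test is inconclusive: at a hypothetical zero of $W$ one has $W'<0$, which is perfectly consistent with $W$ crossing zero while $x$ decreases. This is the step at which the non-attraction hypothesis must enter, and I expect it to be the main obstacle.

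To close the left half I would argue by contradiction using non-attraction. Since $(\delta,M_\delta)$ is the extreme tuple, Theorem \ref{thm:main_thm} says the induced $g$-diffusion is non-attracted to the left boundary; using $g=vq$ its scale measure reduces to $dS_g=v^{-2}\,dx$, so $\int_{-\infty}^{0}v^{-2}\,dx=\infty$. Suppose $W(x_0)\ge 0$ for some $x_0<0$. By the strict monotonicity of $W$ I may pick $x_0'<x_0$ with $W\ge c_0>0$ on $(-\infty,x_0']$, whence $(u/v)'=W/v^2\ge c_0\,v^{-2}$ there; in particular $u/v$ is increasing on $(-\infty,x_0']$, so $\lim_{x\to-\infty}u/v$ exists and is nonnegative. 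But integrating the inequality over $(-\infty,x_0']$ and invoking $\int_{-\infty}^{x_0'}v^{-2}\,dx=\infty$ forces the total increase of $u/v$ to be infinite, hence $\lim_{x\to-\infty}u/v=-\infty$, contradicting $u,v>0$. Therefore $W(x)<0$ for every $x<0$ as well, and together with the right-half case this yields $W<0$ throughout $I$, i.e. $g'g^{-1}>h'h^{-1}$. Note that only the non-attraction of $g$ is used, and that the inequality comes out strict with no separate sharpening step.
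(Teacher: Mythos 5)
Your proof is correct, and it reaches the same contradiction as the paper (with the non-attraction of the $(\delta,M_\delta)$-diffusion guaranteed by Theorem \ref{thm:main_thm}) but by a cleaner mechanism. The paper works with the Riccati variable $\Gamma=\tfrac{u'}{u}-\tfrac{v'}{v}$, which satisfies the nonlinear equation $\Gamma'=-\Gamma^2-\tfrac{2v'}{v}\Gamma-\tfrac{2(\lambda-\delta)}{\sigma^2}$; the sign of $\Gamma$ on each half-line is then controlled by a qualitative dominance argument (``if $\Gamma$ ever gets close to $0$, the term $-\tfrac{2(\lambda-\delta)}{\sigma^2}$ dominates''), and the contradiction is obtained by integrating the Riccati identity to show $\int_{-\infty}^{x_0}v^{-2}<\infty$, i.e.\ that the $(\delta,M_\delta)$-diffusion would be attracted to the left boundary. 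You instead work with the Wronskian $W=u'v-uv'$, for which Proposition \ref{prop:ODE} gives the linear first-order identity $W'=-\tfrac{2(\lambda-\delta)}{\sigma^2}uv<0$; strict monotonicity of $W$ is then immediate, $W(0)=M_\lambda-M_\delta<0$ settles $x\ge 0$ at once, and on the left half you run the non-attraction hypothesis $\int_{-\infty}^{0}v^{-2}\,dx=\infty$ forward through $(u/v)'=W/v^2\ge c_0v^{-2}$ to contradict positivity of $u/v$, rather than deriving finiteness of the scale integral. What your route buys is rigor and economy: the phase-plane dominance arguments and the exponential-integrating-factor computations of the paper are replaced by one exact derivative formula and one monotone integration, and both half-lines are handled uniformly. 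The identification $dS_g=v^{-2}\,dx$ for the scale measure of the $(\delta,M_\delta)$-diffusion, the value $W(0)=M_\lambda-M_\delta$ (the $q$-contributions to $u'(0)$ and $v'(0)$ cancel), and the appeal to Proposition \ref{prop:monotone} for its strict negativity are all correct, so there is no gap.
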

\begin{proof}
For convenience, 
we may assume that $\xi=0$, the left boundary is $-\infty$ and the right boundary is  $\infty.$
Recall $q(x)=e^{-\int_{0}^{x}\frac{k(y)}{\sigma^{2}(y)}\,dy}$ in Proposition \ref{prop:ODE}.
Write $h=uq$ and $g=vq.$
Define $\Gamma=\frac{h'}{h}-\frac{g'}{g}=\frac{u'}{u}-\frac{v'}{v}.$
Then 
$$\Gamma'=-\Gamma^{2}-\frac{2v'}{v}\Gamma-\frac{2(\lambda-\delta)}{\sigma^{2}}\;.$$
It suffices to show that $\Gamma(x)<0$ for all $x.$
First, we show this for $x>0.$
We know $\Gamma(0)=M_\lambda-M_\delta<0.$ We have that $\Gamma(x)<0$ for all $x>0$ because $\Gamma$ ever gets close to $0,$ then the term $-\frac{2(\lambda-\delta)}{\sigma^{2}}$ dominates the right hand side of the equation above.

We now show that $\Gamma(x)<0$ for all $x<0.$
Suppose there exists $x_1<0$ such that $\Gamma(x_1)\geq0.$
Then for all $z<x_1,$ it is obtained that $\Gamma(z)>0$
since if $\Gamma$ ever gets close to $0,$ then term $-\frac{2(\beta-\lambda)}{\sigma^{2}}$ dominates the right hand side of the equation.
Fix a point $x_0$ such that $x_0<x_1,$ thus $\Gamma(x_0)>0.$  
For $z<x_0,$ we have
$$-\frac{2v'(z)}{v(z)}=\frac{\Gamma'(z)}{\Gamma(z)}+\Gamma(z)+\frac{2(\beta-\lambda)}
{\sigma^{2}(z)}\cdot\frac{1}{\Gamma(z)}\;.$$
By the same argument in Appendix \ref{app:pf_main_thm}, we have
\begin{equation*}
\begin{aligned}
\int_{-\infty}^{x_{0}}\frac{1}{v^{2}(y)}\,dy
< \infty \;.
\end{aligned}
\end{equation*}
This implies that
the diffusion process induced by tuple $(\delta,M_{\delta})$
is attracted to the left boundary, which is a contradiction.
\end{proof}

We now prove Proposition \ref{prop:monotone_martingality}.

\begin{proof}
Assume $\delta<\lambda\leq\overline{\beta}.$
Let $g$ and $h$ be the functions corresponding to tuple $(\delta,M_\delta)$ and $(\lambda,M_\lambda),$ respectively. 
Suppose that $e^{\delta t}\,g(X_{t})\,G_t^{-1}$ is a martingale. 
Then by Proposition \ref{prop:dynamics_under_P} and Theorem \ref{thm:criterion_martingality}, we know that
the process $X_t$ under the transformed measure with respect to $(\delta,M_\delta)$ satisfies
$$dX_t=(k+\sigma^2 g'g^{-1})(X_t)\,dt+\sigma(X_t)\,dB_t$$
and does not explode to $\infty.$ 
The above lemma says $g'g^{-1}>h'h^{-1}.$
By the comparison theorem, we know that
a process $Y_t$ with 
$$dY_t=(k+\sigma^2 h'h^{-1})(Y_t)\,dt+\sigma(Y_t)\,dB_t$$
satisfies $Y_t\leq X_t$ almost everywhere, thus $Y_t$
does not explode to $\infty.$ On the other hand, it is clear that $Y_t$ does not explode to $-\infty$ because it is non-attracted to the left boundary.
By Theorem \ref{thm:criterion_martingality}, we conclude that
$e^{\lambda t}\,h(X_{t})\,G_t^{-1}$ is a martingale. 
\end{proof}

\section{An example for Corollary \ref{cor:two_cases}}
\label{app:example_2nd}

Consider the equation $\mathcal{L}h=-\lambda h.$
Recall that
$\overline{\beta}:=\max\{\,\lambda\,|\,(\lambda,z)\in\mathcal{A}\,\}.$
That is, $\overline{\beta}$ is the maximum value among all the $\lambda$'s of the solution pair $(\lambda,h)$ with $h(\xi)=1$ and $h(\cdot)>0.$
In this section, we explore an example such that 
$\mathcal{L}h=-\overline{\beta} h$
has two linearly independent positive solutions.

Let $\mathcal{L}$ be such that
$$\mathcal{L}h(x):=h''(x)+\frac{x}{(1+x^{2})^{3/4}}\,h'(x)\quad \text{ for } \; x\in\mathbb{R}\;.$$
First, $\mathcal{L}h(x)=0$ has two linearly independent positive solutions:
$h_{1}(x)\equiv 1$ and $h_{2}(x)=\int_{-\infty}^{x}e^{-\int_{0}^{y}u(z)\,dz}\,dy$ where $u(z)=\frac{z}{(1+z^{2})^{3/4}}.$

It is enough to show that $\overline{\beta}=0.$
That is, for any fixed $\lambda>0,$ the equation $\mathcal{L}h=-\lambda h$ has no positive solutions.
Suppose there exists such a positive solution $h.$
Define a sequence of functions by $g_{n}(x):=\frac{h(x+n)}{h(n)}$ for $n\in\mathbb{N}.$
By direct calculation, $g_{n}$ satisfies the following equation:
\begin{equation} \label{eqn:g_n}
g_{n}''(x)+\frac{x+n}{(1+(x+n)^{2})^{3/4}}\,g_{n}'(x)=-\lambda g_{n}(x)\;.
\end{equation}
By the Harnack inequality stated below, we have that
$(g_{n})_{n=1}^{\infty}$ is equicontinuous on each compact set on $\mathbb{R};$
thus we can obtain a subsequence $(g_{n_k})_{k=1}^{\infty}$ such that the subsequence converges on $\mathbb{R},$ say the limit function $g.$
Since $g_{n}$ is positive, the limit function $g$ is nonnegative and $g$ is a nonzero function because $g(0)=\lim_{k\rightarrow\infty} g_{n_k}(0)=1.$ 
On the other hand, it can be easily shown that the limit function $g$ satisfies 
$g''(x)=-\lambda g(x)$
by taking limit $n\rightarrow\infty$ in equation \ref{eqn:g_n}.
Clearly there does not exist a nonzero nonnegative solution of this equation when $\lambda>0.$
This is a contradiction.
The author appreciates Srinivasa Varadhan for this example.

\begin{theorem} \textnormal{(Harnack inequality)}

\noindent Let $h:\mathbb{R}\rightarrow\mathbb{R}$ be a positive solution of 
$$a(x)h''(x)+b(x)h'(x)+c(x)h(x)=0\;.$$
Assume that $a(x)$ is bounded away from zero; that is, there is a positive number $l$ such that $a(x)\geq l>0.$ 
Suppose that $a(x),|b(x)|$ and $|c(x)|$ are bounded by a constant $K.$
Then for any $z>0,$
there exists a positive number $M=M(z,K)$ (depending on $z$ and $K,$ but on neither $a(\cdot),b(\cdot),c(\cdot)$ nor $h(\cdot)$)
such that
$\frac{h(x)}{h(y)}\leq M$
whenever $|x-y|\leq z.$
\end{theorem}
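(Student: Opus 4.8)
The plan is to reduce the second-order equation to a scalar Riccati equation for the logarithmic derivative of $h$, prove that this derivative is bounded above and below by a constant depending only on the structural bounds, and then integrate to recover the pointwise quotient estimate.

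First I would set $w:=h'/h=(\log h)'$, which is well-defined and $C^1$ precisely because $h>0$. Differentiating gives $w'=h''/h-w^2$, and substituting $h''/h=-(b/a)(h'/h)-c/a$ from the equation yields the Riccati equation $w'=-w^2-\tfrac{b}{a}\,w-\tfrac{c}{a}$. Since $a\ge l>0$ and $|a|,|b|,|c|\le K$, the coefficients $b/a$ and $c/a$ are bounded by $K/l$ in absolute value, uniformly in $x$ and uniformly over the whole class of admissible equations.

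The heart of the argument is a uniform two-sided bound $|w(x)|\le A$ with $A=A(K,l)$, obtained by a blow-up/comparison argument. For the upper bound, suppose $w(x_{0})$ were very large. Reversing the flow by setting $v(s):=w(x_{0}-s)$ turns the Riccati equation into $v'=v^{2}+\tilde p(s)\,v+\tilde q(s)$ with $|\tilde p|,|\tilde q|\le K/l$, so there is a threshold $A$, depending only on $K/l$, beyond which $v'\ge v^{2}-\tfrac{K}{l}|v|-\tfrac{K}{l}\ge v^{2}/2$. If $v(0)=w(x_{0})>A$ then $v$ is increasing, stays above $A$, and by comparison with the scalar equation $V'=V^{2}/2$ it blows up to $+\infty$ at a finite $s^{*}\le 2/w(x_{0})$. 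But $w\to+\infty$ at the finite point $x_{0}-s^{*}$ forces $\log h\to-\infty$, i.e. $h\to 0^{+}$, contradicting $h>0$ there. A symmetric forward argument (large negative $w$ gives $w'\le -w^{2}/2$ and blow-up to $-\infty$ at a finite point, again a zero of $h$) produces the lower bound. Hence $w$ is trapped in $[-A,A]$ on all of $\mathbb{R}$.

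Finally I would write $h(x)/h(y)=\exp\!\bigl(\int_{y}^{x}w(t)\,dt\bigr)$ and use $|w|\le A$ to get $h(x)/h(y)\le e^{A|x-y|}\le e^{Az}$ whenever $|x-y|\le z$; so $M:=e^{Az}$ works, and it depends only on $z$ and on the bounds $K,l$ through $A$, not on the individual coefficient functions nor on $h$. The step I expect to be the main obstacle is making the comparison/blow-up argument fully rigorous: fixing the threshold $A$ explicitly in terms of $K/l$, verifying that the differential inequality $v'\ge v^{2}/2$ genuinely persists along the trajectory (bootstrapping on the monotonicity of $v$), and invoking the ODE comparison theorem so that the finite-time blow-up of the auxiliary scalar equation transfers to $w$ and thereby forces a zero of $h$. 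Everything else is routine integration.
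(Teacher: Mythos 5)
Your proof is correct, but note that the paper itself offers no proof of this statement to compare against: the Harnack inequality is stated as a known classical fact and is only \emph{used} (to get equicontinuity of the sequence $(g_n)$ in Appendix~\ref{app:example_2nd}), so your argument is a genuine, self-contained addition. The route you take --- pass to the logarithmic derivative $w=h'/h$, observe it solves the Riccati equation $w'=-w^2-(b/a)w-(c/a)$ with coefficients bounded by $K/l$, and rule out $|w(x_0)|>A$ by finite-time blow-up of the comparison equation $V'=\pm V^2/2$ --- is the standard elementary proof in one dimension, and the step you flag as the ``main obstacle'' is in fact routine: if $v(0)>A$ and $v'\ge v^2/2>0$ whenever $v\ge A$, then $v$ cannot have a first crossing of the level $A$ (it would have to be nonincreasing somewhere on $[0,s_1]$ while $v'>0$ there), so the inequality persists and the comparison theorem applies on the whole existence interval of $V$; the cleanest contradiction is then simply that $v(s)=w(x_0-s)$ is finite and continuous at $s^*$ while $v(s)\ge V(s)\to+\infty$ as $s\uparrow s^*$, which avoids even having to argue that $h\to 0^+$. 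Two small remarks. First, your constant is $M=e^{Az}$ with $A=A(K/l)$, so $M$ genuinely depends on the ellipticity lower bound $l$ as well as on $K$ and $z$; the theorem's statement ``$M=M(z,K)$'' is slightly imprecise on this point (a rescaling $a\mapsto\varepsilon a$ shows the $l$-dependence cannot be removed), though it is harmless in the paper's application where $a\equiv 1$. Second, your argument uses that $h$ is positive on an interval of length about $2/A$ around $x_0$, which is automatic here since the domain is all of $\mathbb{R}$; this is worth keeping in mind because it is exactly the feature that fails for a Harnack inequality up to the boundary of a finite interval.
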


\section{Reference Functions}
\label{sec:ref_ftn}

In this section, we focus on the function $\phi$
rather than the value $\beta.$
We assume that we roughly know the behavior of $\phi;$
for example, we know a function $f$ such that $\phi^{-1}f$ is bounded below and above
or such that $\mathbb{E}^{\mathbb{P}}_{\xi}
\left[ (\phi^{-1}f)(X_{t})
\right]$ converges to a nonzero constant.
Knowing $f$ means that we have information
about $\phi$ near the area where the process $X_{t}$ lies with high probability under the
objective measure.
Such a function $f$ is called a {\em reference function} of $\phi.$
More generally and more formally, we define a reference function in the following way.
\begin{definition}
A positive function $f$ is called a {\em reference function} of $\phi$ if
$$\lim_{t\rightarrow\infty}\frac{1}{t}\cdot\log\mathbb{E}^{\mathbb{P}}_{\xi}
\left[ (\phi^{-1}f)(X_{t})
\right]=0\;.$$
\end{definition} 

\begin{proposition} \label{prop:ref_ftn_equi}
Knowing a reference function is equivalent to knowing the value $\beta.$
\end{proposition}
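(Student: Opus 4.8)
The plan is to push the limit that defines a reference function from the objective measure $\mathbb{P}$ over to the risk-neutral measure $\mathbb{Q}$, where every ingredient is observable from market data, and to exploit the fact that the unknown factor $\phi$ cancels out along the way. The whole content of the proposition is a single change-of-measure identity.

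First I would invoke the Radon-Nikodym derivative recorded immediately after Assumption \ref{assume:transition_indep},
$$\left.\frac{d\mathbb{P}}{d\mathbb{Q}}\right|_{\mathcal{F}_t}=e^{\beta t}\,\phi(X_t)\,\phi^{-1}(\xi)\,G_t^{-1},$$
and rewrite the expectation in the definition of a reference function. For any positive $f$, changing measure gives
$$\mathbb{E}^{\mathbb{P}}_\xi\!\left[(\phi^{-1}f)(X_t)\right]=\mathbb{E}^{\mathbb{Q}}_\xi\!\left[(\phi^{-1}f)(X_t)\,e^{\beta t}\,\phi(X_t)\,\phi^{-1}(\xi)\,G_t^{-1}\right]=e^{\beta t}\,\phi^{-1}(\xi)\,\mathbb{E}^{\mathbb{Q}}_\xi\!\left[f(X_t)\,G_t^{-1}\right],$$
where the crucial simplification is that $\phi^{-1}(X_t)\,\phi(X_t)=1$, so the unknown function $\phi$ disappears from the right-hand side and only the market-observable quantity $\mathbb{E}^{\mathbb{Q}}_\xi[f(X_t)G_t^{-1}]$ survives.

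Second, I would apply $\frac{1}{t}\log(\cdot)$ to both sides and let $t\to\infty$. The constant contributes $\frac{1}{t}\log\phi^{-1}(\xi)\to0$, leaving
$$\lim_{t\to\infty}\frac1t\log\mathbb{E}^{\mathbb{P}}_\xi\!\left[(\phi^{-1}f)(X_t)\right]=\beta+\lim_{t\to\infty}\frac1t\log\mathbb{E}^{\mathbb{Q}}_\xi\!\left[f(X_t)\,G_t^{-1}\right].$$
By the definition of a reference function the left-hand side vanishes precisely when
$$\beta=-\lim_{t\to\infty}\frac1t\log\mathbb{E}^{\mathbb{Q}}_\xi\!\left[f(X_t)\,G_t^{-1}\right].$$
The right-hand side depends only on $\mathbb{Q}$, the numeraire $G_t$ and the known function $f$ (compare the long-term-yield formula in Section \ref{sec:transient_recovery}), so this one identity supplies both directions. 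If a reference function $f$ is known, the display computes $\beta$; conversely, if $\beta$ is known, a positive $f$ is a reference function exactly when it satisfies this market-data equation, and such an $f$ certainly exists because $\phi$ itself is trivially a reference function (there $\phi^{-1}f\equiv1$, so the $\mathbb{P}$-expectation is identically $1$). Hence the two pieces of information determine one another.

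The remaining verifications are that $\mathbb{E}^{\mathbb{Q}}_\xi[f(X_t)G_t^{-1}]$ is finite and strictly positive, so the logarithm is legitimate, and that the limit on the right actually exists for the admissible class of $f$. I expect the main obstacle to be exactly this existence-of-the-limit point: the cancellation makes the algebra immediate, but guaranteeing that $\frac1t\log\mathbb{E}^{\mathbb{Q}}_\xi[f(X_t)G_t^{-1}]$ converges—rather than only possessing distinct lower and upper limits—requires either restricting the class of reference functions or appealing to a principal-eigenvalue growth rate for the Feynman-Kac semigroup $f\mapsto\mathbb{E}^{\mathbb{Q}}_\xi[f(X_t)G_t^{-1}]$, and phrasing the equivalence so that this rate is well defined.
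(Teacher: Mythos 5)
Your forward direction is exactly the paper's: the same change-of-measure identity
\begin{equation*}
\mathbb{E}^{\mathbb{Q}}_{\xi}\bigl[G_t^{-1}f(X_{t})\bigr]=\mathbb{E}^{\mathbb{P}}_{\xi}\bigl[(\phi^{-1}f)(X_{t})\bigr]\,\phi(\xi)\,e^{-\beta t}
\end{equation*}
followed by $\frac{1}{t}\log(\cdot)$ and $t\to\infty$ yields $\beta=-\lim_{t\to\infty}\frac{1}{t}\log\mathbb{E}^{\mathbb{Q}}_{\xi}[G_t^{-1}f(X_t)]$, and no extra limit-existence argument is needed there because the definition of a reference function already asserts that the $\mathbb{P}$-limit exists and equals zero. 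Where you diverge from the paper is the converse, and this is the weak point of your write-up. Exhibiting $\phi$ itself as the existence witness is circular in spirit: $\phi$ is precisely the unknown object of recovery theory, so ``knowing $\beta$'' must let you \emph{produce} a reference function from observable data, not merely assert that one exists. Your market-data test ($f$ is a reference function iff $\lim_t\frac1t\log\mathbb{E}^{\mathbb{Q}}_\xi[f(X_t)G_t^{-1}]=-\beta$) is a correct characterization but does not by itself hand you a candidate. The paper closes this by taking any \emph{admissible} pair $(\beta,f)$ with $\mathcal{L}f=-\beta f$ --- computable since $k$, $\sigma$, $r$ and now $\beta$ are known --- for which $e^{\beta t}f(X_t)G_t^{-1}$ is a martingale, so $\mathbb{E}^{\mathbb{Q}}_{\xi}[G_t^{-1}f(X_t)]=e^{-\beta t}f(\xi)$ exactly and $\mathbb{E}^{\mathbb{P}}_{\xi}[(\phi^{-1}f)(X_t)]=(\phi^{-1}f)(\xi)$ is constant in $t$. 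This constructive choice also dissolves the concern you raise in your final paragraph: for such $f$ the quantity $\frac1t\log\mathbb{E}^{\mathbb{Q}}_\xi[f(X_t)G_t^{-1}]$ is identically $-\beta+\frac1t\log f(\xi)$, so the limit trivially exists and there is no need to restrict the class of reference functions or to invoke a principal-eigenvalue growth rate. Replace your existence step with that construction and the proof matches the paper's.
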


\begin{proof}
Suppose we know a reference function $f$ of $\phi.$
From
\begin{equation*}
\begin{aligned}
\mathbb{E}^{\mathbb{Q}}_{\xi}[G_t^{-1}f(X_{t})]
=\mathbb{E}^{\mathbb{P}}_{\xi}\left[(\phi^{-1}f)(X_{t})\right]\phi(\xi)\,e^{-\beta t} \;,
\end{aligned}
\end{equation*}
and by the definition of the reference function, we have that
\begin{equation*}
\begin{aligned}
\lim_{t\rightarrow\infty}\frac{1}{t}\cdot\log \mathbb{E}^{\mathbb{Q}}_{\xi}[ G_t^{-1}f(X_{t})]
=-\beta\;.
\end{aligned}
\end{equation*}
Hence, we know the value $\beta.$
Conversely, suppose we know the value $\beta.$
We show that for any admissible pair $(\beta,f)$ of $\mathcal{L}f=-\beta f,$ $f$ is a reference function.
We have
$$\mathbb{E}^{\mathbb{Q}}_{\xi}[G_t^{-1}\,f(X_{t})]=e^{-\beta t}f(\xi)$$
and thus
$$\mathbb{E}^{\mathbb{P}}_{\xi}[(\phi^{-1}f)(X_{t})]=\mathbb{E}^{\mathbb{Q}}_{\xi}
[G_t^{-1}f(X_{t})]\,e^{\beta t} \,\phi^{-1}(\xi)=(\phi^{-1}f)(\xi).$$
This completes the proof.
\end{proof}

\end{document}